\newtheorem{definition}{Definition}
\newtheorem{theorem}{Theorem}
\newtheorem{lemma}[theorem]{Lemma}
\newtheorem{remark}[theorem]{Remark}
\newtheorem{example}[theorem]{Example}
\newtheorem{proposition}[theorem]{Proposition}
\renewcommand{\cal}{\mathscr}
\newcommand{\rw}{\rightarrow}
\newcommand{\ds}{\delta^*}
\newcommand{\sa}{\Sigma^{\ast}}
\begin{document}

\journal{arXiv}
\date{}

\begin{frontmatter}

\title{Quantitative Model Checking of Linear-Time Properties Based on Generalized Possibility Measures
\thanks{This work is supported by National Science Foundation of China (Grant No: 11271237,61228305) and the Higher School Doctoral Subject Foundation of Ministry of Education of China (Grant No:200807180005).}}

\author{Yongming Li\corref{cor1}}
\ead{liyongm@snnu.edu.cn}


\address{College of  Computer Science, Shaanxi Normal University, Xi'an, 710062, China}

\begin{abstract}
Model checking of linear-time properties based on possibility measures was studied in previous work (Y. Li and L. Li, Model checking of linear-time properties based on possibility measure, IEEE Transactions on Fuzzy Systems, 21(5)(2013), 842-854). However, the linear-time properties considered in the previous work was classical and qualitative, possibility information of the systems was not considered at all. We shall study quantitative model checking of fuzzy linear-time properties based on generalized possibility measures in the paper. Both the model of the system, as well as the properties the system needs to adhere to, are described using possibility information to identify the uncertainty in the model/properties. The systems are modeled by {\sl generalized possibilistic Kripke structures} (GPKS, in short), and the properties are described by fuzzy linear-time properties. Concretely, fuzzy linear-time properties about reachability, always reachability, constrain reachability, repeated reachability and persitence in GPKSs are introduced and studied. Fuzzy regular safety properties and fuzzy $\omega-$regular properties in GPKSs are introduced, the verification of fuzzy regular safety properties and fuzzy $\omega-$regular properties using fuzzy finite automata are thoroughly studied. It has been shown that the verification of fuzzy regular safety properties and fuzzy $\omega-$regular properties in a finite GPKS can be transformed into the verification of (always) reachability properties and repeated reachability (persistence) properties in the product GPKS introduced in this paper. Several examples are given to illustrate the methods presented in the paper.
\end{abstract}

\begin{keyword}
model checking, possibility theory, linear temporal logic, fuzzy finite automaton, fuzzy regular language,  generalized possibilistic Kripke structure.
\end{keyword}
\end{frontmatter}

\baselineskip 20pt

\section{Introduction}

Model checking \cite{baier08, edmund99} is an effective automated technique to analyze correctness of reactive systems (e.g. software and hardware design), it consists of three main steps: modeling the system, specifying the properties of the system, and verifying whether the properties hold in the system using model-checking algorithms. Systems are usually represented as a finite state models or Kripke structures. Properties of the system are often specified using temporal logics, such as Linear Temporal Logic (LTL) or Computational Tree Logic (CTL). The verification step gives a boolean answer: either true (the system satisfies the specification) or false with counterexample (the system violates the specification).


The models and temporal logic are usually qualitative and boolean, which are useful for the representation and verification of computation systems, such as hardware and software systems. However, finite state models are often inadequate for the representation of systems that are not purely computational but partly physical, such as hardware and software systems that interact with a physical environment and Cyber-Physical Systems (CPS). Many quantitative extensions of the state-transition model have been proposed for this purpose, such as models that embed state changes into time (\cite{baier08}), models that assign probabilities (\cite{baier08}), possibilities (\cite{li13}) or truth values (\cite{pan15}) to state changes with uncertainties.

Furthermore, for the application to quantitative models and quantitative specifications, quantitative model-checking approaches have been proposed recently. Different approaches are applicable to different models types including timed (\cite{baier08}), probabilistic and stochastic (\cite{huth97}), multi-valued (\cite{chechik012,chechik04,chechik06}), quality of service or soft constraints (\cite{lluch05}), discounted sources-restricted (\cite{dealfaro05,almagor14}), possibilistic (\cite{li13,li14,li15}) or fuzzy (\cite{pan15,mukherjee13,frigeri14}, etc, methods.

In order to measure the uncertainty quantity in verification of nondeterministic systems (e.g., hardware and software design interaction with complex environment) with nonadditive measures, especially, fuzzy measures, LTL model checking based on possibility measures was first considered in \cite{li13}, where the models are presented by {\sl possibilistic Kripke structures} (PKS, in short), while the properties are still classical and qualitative. The possibilistic model checking of classical reachability properties and classical $\omega$-properties against possibilistic Kripke structures was exploited in detail in \cite{li13}. The connections and distinct differences between possibilistic model checking and probabilistic model checking of classical linear-time properties were analyzed.

However, the work in \cite{li13} is still restrictive and needs to improve in at least three aspects. The first and also the most important one is to consider properties containing possibility information of system, which we call it the possibilistic linear-time property or the fuzzy linear-time property in this paper. {\bf Let us see {\sl the patient's example}. In the patient's example, the doctor wants to describe a patient's physical status after he/she took drug. The doctor noticed a gradual improvement in his patient. One of the description is ``After a week of treatment, the patient basically recovered''. This description is vague since the concept of ``basically recovered'' can not assessed precisely, as it may depends on the doctors' (patients') perception. However, we can use fuzzy linear-time property (indeed, generalized linear-temporal logic formula) to describe it. The detail is described after Definition \ref{de:fuzzy lt property} and Definition \ref{de:the possibility measure of lt property} in Section 3. } Second, as we said in \cite{li15}, PKSs are not sufficient to represent those systems with possibilistic uncertainty in labeling functions. Recall that in a PKS, the labeling function is still classical. However, in practice systems, we need to describe an event using fuzzy logic. {\bf For example, in the patient example, the doctor can describe the patient's physical status in three states ``poor'', ``fine'' and ``excellent''. However, for a patient in recovery period, it is difficult to say in which state the patient was in. The doctor can use fuzzy logic to describe the (fuzzy) state of the patient, e.g., the patient basically recovered, or the patient was almost fine, but not all excellent.}
In this case, the labeling function should be fuzzy,
 then the notion of generalized PKSs (GPKS, in short) was proposed in \cite{li15} to enhance its modeling power. It is necessary to develop a tool in which the systems are represented by GPKSs and the properties are described by fuzzy linear-time properties. The third one is to consider the necessity measures implied in the models of systems. As well known, we need both possibility measure and necessity measure to treat uncertainty in possibility theory. The necessity measure was not considered at all in the previous work \cite{li13} and possibility information was not considered sufficiently there. {\bf Although the necessary measure and possibility measure are dual, we can use possibility measure to represent necessary measure, the information implied in necessary measure of an event is completely different with that implied in its possibility measure. For example, $N(E)=1$ shows the event $E$ is certainty true, while $Po(E)=1$ only implies that $E$ is possible, but it is not strange that $E$ does not occur. Furthermore, in some cases, the necessary measure can be used to simply represent the possibility measure of some event.} These three aspects form the topics of this paper and also the essential differences of this paper with the previous works in \cite{li13,li14,li15}. {\bf The former two forms the main contribution of this paper.}

 In particular, the possibilities of model checking of fuzzy linear-time properties on reachability, always reachability, repeat reachability and persistence to fuzzy states (instead of classical states in \cite{li13}) in GPKS are studied. Furthermore, we show that the possibility of the above fuzzy reachability can be computed by fuzzy matrix operations or the fixed point algorithm instead of solving fuzzy relational equations iteratively used in \cite{li13}. Fuzzy regular safety properties and fuzzy $\omega$-regular properties in a GPKS are introduced. Some calculation methods related to model checking of the above fuzzy linear-time properties using generalized possibility measures and generalized necessity measures are discussed. In fact, by introducing the product GPKS, it is shown that model checking of fuzzy regular safety properties and fuzzy $\omega$-regular properties in a GPKS can be calculated by the possibility of reachability or always reachability, repeated reachability or persistence properties of the product GPKS.

	The rest of the paper is organized as follows. Section 2 gives some introduction of linear-temporal logic,  possibility theory, GPKS defined in \cite{li15}. Some possibility measures and necessity measures related to GPKS are also introduced. In Section 3, the notion of fuzzy linear-time properties in a GPKS are introduced, its relations with possibilistic linear-temporal logic and fuzzy automata are also discussed. In Section 4, the possibility measures of reachability, always reachability, repeated reachability and persistence properties to fuzzy states are studied. The model-checking of fuzzy regular safety and fuzzy $\omega$-regular linear-time properties in a GPKS using fuzzy finite automata are studied.  A thermostat example is given in Section 5.
The paper ends with a conclusion. We place the proofs of some propositions of this
article in the Appendix parts for readability.

\section{Some preliminaries}\label{sec1:PKandmes}

In this section, we give some basic knowledge about linear-temporal logic (LTL) (\cite{baier08,edmund99}), the possibility theory, and recall the notion of generalized possibilistic Kripke structure introduced in \cite{li15}.

\subsection{Linear-temporal logic (LTL)}

In logic, linear-temporal logic (LTL) is a modal temporal logic with modalities referring to time. In LTL, one can write formulae about the future of paths, e.g. a condition will eventually be true, a condition will be true until another fact become true. LTL was first proposed for the formal verification reactive systems (especially, computer programs) by Pnueli in 1977 (\cite{pnueli77}).

The basic parts of LTL-formulated are atomic porpositions $AP$ (state labels at $AP$), the Boolean connectives like conjunction $\wedge$, and negation $\neg$, and two basic temporal modalities $\bigcirc$ (is read as ``next'') and $\sqcup$ (is read as ``until''). The atomic proposition $a\in AP$ stands for the state label $a$ in a Kripke structure. The $\bigcirc$-modality is a unary prefix operator and requires a single LTL formula as argument. Intuitively, formula $\bigcirc \varphi$ means that $\varphi$ is true in the next step after the current time. The $\sqcup$-modality is a binary infix operator and requires two LTL formulae as argument. Formula $\varphi_1\sqcup\varphi_2$ holds at the current moment, if there is some future moment for which $\varphi_2$ holds and $\varphi_1$ holds at all moments until that future moment.

Formally, the syntax and semantics of LTL are defined as follows.

{\bf Syntax of LTL} LTL formulae over the set $AP$ of atomic propositions are formed according to the following grammar:
\begin{equation}
\varphi ::=true | a|\varphi_1\wedge\varphi_2| \neg\varphi | \bigcirc\varphi|\varphi_1\sqcup\varphi_2
\nonumber
\end{equation}
where $a\in AP$.

For the precedence order of the operators, the unary operators binds stronger than the binary ones, $\neg$ and $\bigcirc$ bind equally strong. The temporal operator $\sqcup$ takes precedence over $\wedge$, $\vee$ and $\rw$.

Using the Boolean connectives $\wedge$ and $\neg$, the full power of propositional logic is obtained. Some useful induced Boolean connectives such as disjunction $\vee$, implication $\rw$ can be derived as follows:

$\varphi_1\vee \varphi_2=\neg(\neg\varphi_1\wedge\neg\varphi_2)$,

$\varphi_1\rw \varphi_2=\neg\varphi_1\vee\varphi_2$.

The until operator allows to derive the temporal modalities $\lozenge$ (``eventually'', sometimes in the future) and $\square$ (``always'', form now on forever) as follows:

$\lozenge\varphi=true\sqcup \varphi$, $\square\varphi=\neg\lozenge\neg\varphi$.

 As a result, the following intuitive meaning of $\lozenge$ and $\square$ is obtained. $\lozenge\varphi$ ensures that $\varphi$ will be true eventually in the future. $\square\varphi$ is satisfied if and only if $\varphi$ holds from now on forever.

By combining the temporal modalities $\lozenge$ and $\square$, new temporal modalities are obtained. For instance, $\square\lozenge a$ (``always eventually $a$'') describes the path property stating that an $a$-state is visited infinitely often. $\lozenge\square a$ (``eventually forever $a$) expresses that from some moment $j$ on, only $a$-states are visited.

{\bf Semantics of LTL} Let $\varphi$ be a LTL formula. The language semantics of $\varphi$ is interpreted over the computation or $\omega$-language on the alphabet $\Sigma=\{0,1\}^{AP}$. We also use iff to
abbreviate ``if and only if''. We define $\sigma\models \varphi$ iterately as follows: for $\sigma=A_0A_1\cdots\in \Sigma^{\omega}$, write $\sigma_j=A_jA_{j+1}\cdots$, and $a\in AP$,

$\sigma\models true$;

$\sigma\models a$ iff $a\in A_0$;

$\sigma\models\varphi_1\wedge \varphi_2$ iff $\sigma\models\varphi_1$ and $\sigma\models\varphi_2$;

$\sigma\models\neg\varphi$ iff $\sigma\not\models\varphi$;

$\sigma\models\bigcirc\varphi$ iff $\sigma_1\models \varphi$;

$\sigma\models\varphi_1\sqcup \varphi_2$ iff $\exists j\geq 0. \sigma_j\models\varphi_2$ and $\sigma_i\models \varphi_1$ for all $0\leq i<j$.

For the induced operator $\lozenge$ and $\square$, the expected result is:

$\sigma\models\lozenge\varphi$ iff $\exists j\geq 0. \sigma_j\models \varphi$;

$\sigma\models\square\varphi$ iff $\forall j\geq 0. \sigma_j\models \varphi$.

LTL is used to represent linear-time properties of the systems. For each LTL-formula $\varphi$, the linear-time property corresponding to $\varphi$ is defined as follows,

$Word(\varphi)=\{\sigma\in (2^{AP})^{\omega} |\sigma\models \varphi\}$.

In this paper, we shall use LTL to represent fuzzy linear-time properties of the systems.

The model of LTL is Kripke structures. A Kripke structure consists of a set of state $S$, a transition relation $R\subseteq S\times S$, an initial state $s_0\in S$, a set of atomic propositions, $AP$, and a labeling function $L: S\rw 2^{AP}$. For each $s\in S$, the labeling function provides a set of atomic propositions hold in the state $s$. A path $\pi$ of the Kripke structure is an infinite state sequence $\pi=s_0s_1\cdots\in S^{\omega}$ such that $(s_i,s_{i+1})\in R$ for all $i\geq 0$. The trace of the path $\pi$, denoted $trace(\pi)$, is the $\omega$-word $L(s_0)L(s_1)\cdots$ over $2^{AP}$. Then for an LTL formula $\varphi$, the path semantics $\pi\models \varphi$ is defined as $trace(\pi)\models \varphi$. LTL is called linear, because the qualitative notion of time is path-based and viewed to be linear: at each moment of time there is only one possible successor state and thus each time moment has a unique possible future.

\subsection{Possibility theory}

Possibility theory was first introduced by Lotfi Zadeh (\cite{Zadeh78}) in 1978 as an extension of his theory of fuzzy sets and fuzzy logic. Didier Dubois and Henri Prade (\cite{dubois88,dubois06,dubois11}) further contributed to its development. Roughly to say, possibility theory is an uncertainty theory devoted to the handling of incomplete information and is an alternative to probability theory. It differs from the latter by the use of a pair of dual set-functions (possibility and necessity measures) instead of only one. This feature makes it easier to capture partial ignorance. Furthermore, it is not additive and makes sense on ordinal structures.

For simplicity, assume that the universe of discourse $U$ is a nonempty set, and assume that all subsets are measurable. A possibility measure is a function $\Pi$ from the powerset $2^U$ to $[0, 1]$ such that:

(1) $\Pi(\emptyset)=0$, (2) $\Pi(U)=1$, and (3) $\Pi(\bigcup E_i)=\bigvee \Pi(E_i)$ for any subset family $\{E_i\}$ of the universe set $U$, where we use $\bigvee_{i\in I}a_i$ to denote the supremum or the least upper bound of the family of real numbers $\{a_i\}_{i\in I}$, dually, we use $\bigwedge_{i\in I}a_i$ to denote the infimum  or the largest lower bound of the family of real numbers $\{a_i\}_{i\in I}$.

If $\Pi$ only satisfies the conditions (1) and (3), then we call $\Pi$ a generalized possibility measure.

It follows that, the generalized possibility measure on a nonempty set is determined by its behavior on singletons:
\begin{equation}\label{eq:possibility distribution}
\Pi(E)=\bigvee_{x\in E} \Pi(\{x\}).
\end{equation}
The function $\pi: U\longrightarrow [0,1]$ defined by $\pi(x)=\Pi(\{x\})$ is called the possibility distribution of $\Pi$, and the measure $\Pi$ is unique defined by Eq.(\ref{eq:possibility distribution}), i.e., $\Pi$ is uniquely defined by the possibility distribution $\pi$.

Whereas probability theory uses a single number, the probability, to describe how likely an event is to occur, possibility theory uses two concepts, the possibility and the necessity of the event. For any set $E$, the necessity measure $N$ is defined by,
\begin{equation}\label{eq:necessity measure}
N(E)=1-\Pi(U-E).
\end{equation}
A necessity measure is a function $N$ from the powerset $2^U$ to $[0, 1]$ such that:

(1) $N(\emptyset)=0$, (2) $N(U)=1$, and (3) $N(\bigcap E_i)=\bigwedge N(E_i)$ for any subset family $\{E_i\}$ of the universe set $U$.

If $N$ only satisfies the conditions (2) and (3), then we call $N$ a generalized necessity measure.

It follows that $\Pi(E)+N(U-E)=1$, and $N$ is the dual of $\Pi$ and vise versa. In general, $\Pi$ and $N$ are not self-dual, this is contrary to probability measure, which is self-dual. As a result, we need both possibility measure and necessity measure to treat uncertainty in the theory of possibility.

In general, for a possibility measure $\Pi$ and its dual $N$, $N(E)\leq \Pi(E)$ always holds for any event $E$ (\cite{dubois88}). It means that the necessity measure of the event $E$ is not larger than the possibility measure of $E$. In this way, $N(E)=1$ means that $E$ is necessary and certainly true. $\Pi(E)=0$ means that $E$ is impossible and certainly false. For the further introduction of possibility theory, we refer to \cite{dubois88,dubois06,dubois11} and the references therein.

We shall use possibility measures and necessity measures in the possibilistic linear-time properties model checking in this paper.

\subsection{Generalized possibilistic Kripke structure and its induced generalized possibility measure}\label{sec1:PKs}

Let us give the models of uncertainty systems we used in this paper as follows.

\begin{definition}\cite{li15}\label{de:gpkripke}
{\rm A generalized possibilistic Kripke structure (GPKS, in short) is a tuple $M=(S,P,I,AP,L)$, where

(1) $S$  is a countable, nonempty set of states;

(2) $P:S\times S\longrightarrow [0,1]$ is a function, called possibilistic transition distribution function;

(3) $I:S\longrightarrow [0,1]$ is a function, called possibilistic initial distribution function;

(4) $AP$ is a set of atomic propositions;

(5) $L:S\times AP\longrightarrow [0,1]$ is a possibilistic labeling function, which can be viewed as function mapping a state $s$ to the fuzzy set of atomic propositions which are possible in the state $s$, i.e., $L(s,a)$ denotes the possibility or truth value of atomic proposition $a$ that is supposed to hold in $s$.

Furthermore, if the set  $S$  and  $AP$ are finite sets, then $M=(S,P,I,AP,L)$ is called a
finite generalized possibilistic Kripke structure.}

\end{definition}

\begin{remark} {\rm (1) In Definition \ref{de:gpkripke}, if we require the transition possibility distribution and initial distribution to be {\sl normal}, i.e., $\vee_{s'\in S}P(s,s')=1$ and $\vee_{s\in S}I(s)=1$, and the labeling function $L$ is also crisp, i.e., $L: S\times AP\longrightarrow \{0,1\}$. Then we obtain the notion of possibilistic Kripke structure (\cite{li13,li14}). In this case, we also say that $M$ is normal. This is one of the reasons why we call the structure defined in Definition \ref{de:gpkripke} {\sl generalized} possibilistic Kripke structure. PKS is a special instance of GPKS, i.e., a normal GPKS. GPKS can be used for more widely systems than PKS in describing the incomplete infromation of uncertainty events. Example \ref{ex:gpks} below is such an example. For more examples, we refer to Ref.\cite{li15}.

(2) The possibilistic transition function $P: S\times S\longrightarrow [0,1]$ can also be represented by a fuzzy matrix. For convenience, this fuzzy matrix is also written as $P$, i.e., $$P=(P(s,t))_{s,t\in S},$$ $P$ is also called the (fuzzy) transition matrix of $M$.  For the fuzzy matrix $P$, its transitive closure is denoted by $P^+$. When $S$ is finite, and if $S$ has $N$ elements, i.e., $N=|S|$, then $P^+=P\vee P^2\vee\cdots\vee P^N$ \cite{li05}, where $P^{k+1}=P^k\circ P$ for any positive integer number $k$. Here, we use the symbol $\circ$ to represent the max-min composition operation of fuzzy matrixes. Recall that the max-min composition operation
of fuzzy matrixes is similar to ordinary matrix multiplication operation, just let ordinary multiplication and addition operations of real numbers be replaced by minimum and maximum operations of real numbers (\cite{Zadeh65,Zadeh78}). For a fuzzy matrix $P$, the reflective and transitive closure of $P$, denoted by $P^{\ast}$, is defined by $P^{\ast}=P^0\vee P^+$, where $P^0$ denote the identity matrix.}
\end{remark}

For a generalized possibilistic Kripke structure $M=(S,P,I,AP,L)$, using $P^+$ and $P^{\ast}$, we can get two generalized possibilistic Kripke structures $M^+=(S,P^+,I,AP,L)$ and $M^{\ast}=(S,P^{\ast},I,AP,L)$.

	The states $s$ with $I(s)>0$ are considered as the initial states.
	Paths in a GPKS $M$ are infinite paths in the underlying digraph. They are defined as infinite state sequence $\pi=s_0s_1s_2\cdots\in S^\omega$ such that $P(s_i,s_{i+1})>0$ for all $i\geq 0$. Let $Paths(M)$ denote the set of all paths in $M$, and $Paths_{fin}(M)$ denotes the set of finite path fragments $s_0s_1\cdots s_n$ where $n\geq 0$ and $P(s_i,s_{i+1})>0$ for $0\leq i\leq n-1$ . Let $Paths(s)$ denote the set of all paths in $M$ that start in state $s$. Similarly $Paths_{fin}(s)$ denotes the set of finite path fragments $s_0s_1\cdots s_n$ such that $s_0=s$ .

\begin{figure}[ht]
\begin{center}
\includegraphics[scale=0.7]{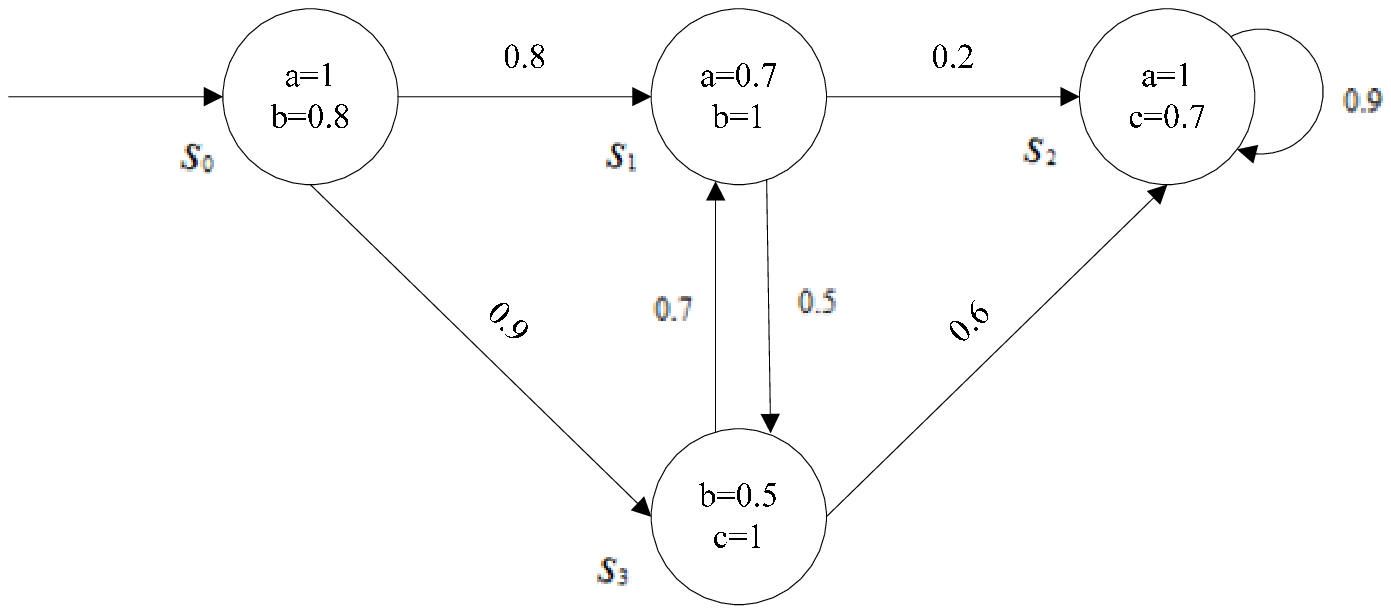}
\center{Fig.1.}A GPKS $M$ with four states
\vspace{-0.3cm}
\end{center}
\end{figure}

\begin{example}\label{ex:gpks}
    {\rm  Fig.1 represents a GPKS $M=(S,P,I,AP,L)$, in which states are represented by ovals and transitions by labeled edges, state names are depicted outside the ovals.
Labeling functions of the states are depicted inside the ovals. Initial states are indicated by having an incoming arrow without source.
The state space is $S=\{s_0,s_1,s_2,s_3\}$, $AP=\{a,b,c\}$, the set of initial states consists of only one state $s_0$ such that $I(s_0)=1$. The transition possibility distribution is $P(s_0,s_1)=0.8$, $P(s_0,s_3)=0.9, P(s_1,s_2)=0.2, P(s_1,s_3)=0.5, P(s_2,s_2)=0.9 ,P(s_3,s_1)=0.7, P(s_3,s_2)=0.6$. The labeling function are $L(s_0)=1/a+0.8/b$, $L(s_1)=0.7/a+1/b$, $L(s_2)=1/a+0.7/c$, $L(s_3)=0.5/b+1/c$, where we use the form $L(s_0)=1/a+0.8/b$ to represent a fuzzy set, it means that $L(s_0)(a)=L(s_0,a)=1, L(s_0)(b)=L(s_0,b)=0.8$ and $L(s_0)(c)=L(s_0,c)=0$. The same applies to fuzzy sets $L(s_1)$, $L(s_2)$ and $L(s_3)$. Henceforth, we often identify the transition possibility distribution $P:S\times S\longrightarrow[0,1]$ with the matrix $(P(s,t))_{s,t\in S}$. Similarly, the initial distribution $I:S\longrightarrow[0,1]$ is often viewed as a vector $(I(s))_{s\in S}$. Using the state order $s_0<s_1<s_2<s_3$,  the matrix $P$ and the vector $I$ are given by
$P=\left(\begin{array}{cccc}0&0.8&0&0.9\\
0&0&0.2&0.5\\
0&0&0.9&0\\
0&0.7&0.6&0
\end{array}
\right)$ and
$I=\left(\begin{array}{ccc}1\\0\\0\\0
\end{array}
\right)$ .
Obviously, $M$ is not normal.}
\end{example}

In the following, we give a generalized possibility measure over a GPKS $M$.

\begin{definition}(\cite{baier08}) \label{sec:the cylinder}
{\rm Given a Kripke structure $M$, the cylinder set of $\hat{\pi}=s_0\cdots s_n\in Paths_{fin}(M)$ is defined as,
\begin{equation*}
Cyl(\hat{\pi})=\{\pi\in Paths(M)|\hat{\pi}\in Pref(\pi)\},
\end{equation*}
where $Pref(\pi)=\{\pi'\in Paths_{fin}(M) |\pi'$ is  a finite prefix  of $\pi\}$.}
\end{definition}

\begin{definition}\cite{li15}\label{de:possibility measure}
      {\rm  For a generalized possibilistic Kripke structure $M$, a function $Po^M: Paths(M)\rightarrow [0,1]$ is defined as follows:
\begin{equation}\label{eq:possibility measure-path}
Po^M(\pi)=I(s_{0})\wedge\bigwedge\limits_{i=0}^\infty P(s_{i},s_{i+1})
\end{equation}
for any $\pi=s_{0}s_{1} \cdots \in Paths(M).$
Furthermore, we define
\begin{equation}\label{eq:possibility measure}
Po^M(E)=\vee\{Po^M(\pi)\mid\pi\in E\}
\end{equation}
for any $E\subseteq Paths(M)$, then, we have a well-defined function $$Po^M:2^{Paths(M)}\longrightarrow [0,1],$$ $Po^M$ is called the generalized possibility measure over $\Omega=2^{Paths(M)}$ as it has the properties stated in Theorem \ref{th:possibility measure}. If $M$ is clear from the context, then $M$ is omitted and we simply write $Po$ instead of $Po^M$.}
\end{definition}	

For a generalized Kripke structure $M=(S,P,I,AP,L)$, let us define a function $r_P:S\longrightarrow [0,1]$ as follows, which denotes the largest possibility of the paths in $M$ originated at the state $s$, for any state $s\in S$,
\begin{equation}\label{eq:r-fucntion}
r_P(s)=\bigvee\{\bigwedge_{i=0}^{\infty}P(s_i,s_{i+1}) | s_0=s, \ {\rm and}\ s_i\in S \ {\rm for \ any} \ i\geq 1 \}.
\end{equation}

The role of the function $r_P$ is stated in Theorem \ref{th:possibility on cyl} and Theorem \ref{th:possibility measure}. The following proposition gives a method to calculate $r_P$.

\begin{proposition}\cite{li15}\label{pr:r-function}
For a finite generalized Kripke structure $M$, and a state $s$ in $M$, we have
\begin{equation}\label{eq:r-fucntion}
r_P(s)=\bigvee\{P^+(s,t)\wedge P^+(t,t) | t \in S\}.
\end{equation}
In the matrix notation we have,
\begin{equation}\label{eq:r-fucntion-matrix}
r_P=P^+\circ D,
\end{equation}
\noindent where $D=(P^+(t,t))_{t\in S}.$

In particular, $P$ is normal iff $r_P(s)=1$ for any state $s$.
\end{proposition}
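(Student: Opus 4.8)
The plan is to establish the first identity by proving two inequalities, read the matrix identity off immediately, and then obtain the normality criterion from the path constructions used along the way. Write $R(s)=\bigvee_{t\in S}\big(P^+(s,t)\wedge P^+(t,t)\big)$ for the claimed value. Throughout I will exploit finiteness of $M$: the fuzzy matrix $P$ takes only finitely many distinct values in $[0,1]$, so the infimum $\bigwedge_{i=0}^\infty P(s_i,s_{i+1})$ defining the possibility of an infinite state sequence is really the minimum of a finite set of numbers, and every supremum ranging over such path-possibilities is attained. I also recall the standard reading of the max-min powers: $P^+(s,t)$ is the largest min-transition value over all finite path fragments of length $\ge 1$ from $s$ to $t$, and this maximum is attained by a concrete fragment (a maximum over the finitely many fragments of length at most $N=|S|$).

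First I would prove $r_P(s)\ge R(s)$. Fix $t\in S$, choose a finite fragment $\hat\pi$ from $s$ to $t$ realizing $P^+(s,t)$ and a finite cycle $\hat\gamma$ at $t$ realizing $P^+(t,t)$, and form the infinite path that follows $\hat\pi$ once and then loops $\hat\gamma$ forever. Its possibility is the minimum of the two realized values, i.e.\ $P^+(s,t)\wedge P^+(t,t)$, so $r_P(s)\ge P^+(s,t)\wedge P^+(t,t)$; taking the supremum over $t$ gives $r_P(s)\ge R(s)$. Terms with $P^+(s,t)=0$ or $P^+(t,t)=0$ contribute $0$ and may be discarded, and when both values are positive the constructed sequence has all transitions positive, hence is a genuine path.

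For the reverse inequality take any state sequence $\pi=s_0s_1\cdots$ with $s_0=s$ admitted in the supremum defining $r_P(s)$, and put $v=\bigwedge_{i=0}^\infty P(s_i,s_{i+1})$. Since $S$ is finite, by the pigeonhole principle some state $t$ occurs at infinitely many positions $i_0<i_1<\cdots$ along $\pi$. The fragment of $\pi$ from $s_0$ to the occurrence $s_{i_1}=t$ has length $i_1\ge 1$ and is a path fragment from $s$ to $t$ whose min-transition value is $\ge v$, so $P^+(s,t)\ge v$; the fragment between positions $i_0$ and $i_1$ is a cycle at $t$ of length $\ge 1$ and value $\ge v$, so $P^+(t,t)\ge v$. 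Hence $v\le P^+(s,t)\wedge P^+(t,t)\le R(s)$, and since $v$ was the possibility of an arbitrary sequence from $s$, taking the supremum over all of them gives $r_P(s)\le R(s)$. This proves the first identity; the matrix identity $r_P=P^+\circ D$ then follows at once, since the $s$-entry of $P^+\circ D$ is exactly $\bigvee_{t\in S}\big(P^+(s,t)\wedge D(t)\big)=R(s)$ with $D=(P^+(t,t))_{t\in S}$.

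For the normality statement I would argue directly with paths. If $P$ is normal then $\bigvee_{s'\in S}P(s,s')=1$ for every $s$; by finiteness this maximum is attained, so each state has a successor reached with possibility $1$, and iterating from $s$ builds an infinite path all of whose transitions equal $1$, whence $r_P(s)=1$ (while $r_P(s)\le 1$ always holds). Conversely, if $r_P(s)=1$ then, the supremum over sequences being attained, some sequence $s\,s_1\,s_2\cdots$ satisfies $\bigwedge_i P(s_i,s_{i+1})=1$, which forces every transition along it, in particular $P(s,s_1)$, to equal $1$; thus $\bigvee_{s'\in S}P(s,s')=1$ at $s$. I expect the main obstacle to be the bookkeeping in the reverse inequality: one must justify the infinitely repeated state $t$ via pigeonhole and verify that the induced prefix and loop are legitimate length-$\ge 1$ witnesses for $P^+(s,t)$ and $P^+(t,t)$ of value at least $v$, which is precisely why the prefix is taken up to the second occurrence $i_1$ rather than the first (otherwise the degenerate case $s=t$ would produce a length-$0$ prefix that is not admissible for $P^+$).
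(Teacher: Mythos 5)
The paper gives no proof of this proposition at all---it is imported from \cite{li15} by citation---so there is no in-paper argument to compare yours against; your proof must stand on its own, and it does. The two-inequality structure is sound: the ``lasso'' construction (a fragment realizing $P^+(s,t)$ followed by a cycle realizing $P^+(t,t)$ repeated forever) gives $r_P(s)\ge\bigvee_{t\in S}\bigl(P^+(s,t)\wedge P^+(t,t)\bigr)$, and the pigeonhole argument gives the reverse, with the one delicate point---that the prefix must run to the \emph{second} occurrence $i_1$ of the recurring state so that both witnessing fragments have length at least one and are admissible for $P^+$---correctly identified and handled. Your preliminary observation that $P$ has finitely many distinct entries, so every infinite infimum is a minimum and every supremum over sequence-values is attained, is exactly the finiteness fact that both the attainment of $P^+$ by concrete fragments and the converse of the normality claim rest on. One small simplification is available in that converse: since $\bigwedge_{i\ge 0}P(s_i,s_{i+1})\le P(s_0,s_1)$ for every sequence, one has $r_P(s)\le\bigvee_{s'\in S}P(s,s')$ directly, so $r_P(s)=1$ forces normality at $s$ without invoking attainment of the supremum over sequences; but your route through attainment is also valid.
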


\begin{theorem}\cite{li15}\label{th:possibility on cyl} Let $M$ be a  finite GPKS. Then the possibility measure of the cylinder sets is given by $Po(Cyl(s_{0}\cdots s_{n}))=I(s_{0})\wedge \bigwedge\limits_{i=0}^{n-1} P(s_{i},s_{i+1})\wedge r_P(s_n)$ when $n>0$ and $Po(Cyl(s_{0}))=I(s_{0})\wedge r_P(s_0)$.
\end{theorem}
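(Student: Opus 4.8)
The plan is to unfold both sides directly from the definitions in Definition~\ref{de:possibility measure} (Eqs.~(\ref{eq:possibility measure-path}) and~(\ref{eq:possibility measure})) and to reduce the computation of $Po(Cyl(s_0\cdots s_n))$ to a single application of the distributive law of $\wedge$ over arbitrary suprema in the lattice $[0,1]$, after which the leftover supremum is recognized verbatim as $r_P(s_n)$.

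First I would fix $\hat\pi=s_0\cdots s_n\in Paths_{fin}(M)$ and observe that every $\pi\in Cyl(\hat\pi)$ has the form $\pi=s_0\cdots s_ns_{n+1}s_{n+2}\cdots$, i.e.\ its first $n+1$ coordinates are pinned to $\hat\pi$ while the tail $s_ns_{n+1}\cdots$ ranges over all infinite paths of $M$ beginning in the fixed state $s_n$. Using Eq.~(\ref{eq:possibility measure-path}) I would split the infinite infimum at index $n$:
\[
Po^M(\pi)=\Big(I(s_0)\wedge\bigwedge_{i=0}^{n-1}P(s_i,s_{i+1})\Big)\wedge\Big(\bigwedge_{i=n}^{\infty}P(s_i,s_{i+1})\Big),
\]
noting that the first parenthesized factor depends only on $\hat\pi$ and is therefore constant as $\pi$ ranges over the cylinder, whereas the second factor depends only on the tail.

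The key step is then to take the supremum over $\pi\in Cyl(\hat\pi)$ as prescribed by Eq.~(\ref{eq:possibility measure}). Writing the prefix factor as $c:=I(s_0)\wedge\bigwedge_{i=0}^{n-1}P(s_i,s_{i+1})$, which is independent of the tail, I would pull it out using the identity $c\wedge\bigvee_j a_j=\bigvee_j(c\wedge a_j)$, valid for arbitrary families in the complete lattice $[0,1]$ under $\wedge=\min$, $\vee=\sup$. This leaves
\[
Po(Cyl(\hat\pi))=c\wedge\bigvee\Big\{\bigwedge_{i=n}^{\infty}P(s_i,s_{i+1}) : s_ns_{n+1}\cdots\ \text{a path from } s_n\Big\},
\]
and after the reindexing $t_j:=s_{n+j}$ the inner supremum is exactly the defining expression~(\ref{eq:r-fucntion}) of $r_P(s_n)$, giving the claimed formula for $n>0$. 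The case $n=0$ is the degenerate instance in which the prefix contributes no transitions, so $c=I(s_0)$ and the identical computation yields $Po(Cyl(s_0))=I(s_0)\wedge r_P(s_0)$.

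The one point requiring care — and the main (minor) obstacle — is reconciling the set over which the tail supremum runs with the set used to define $r_P$. The cylinder supremum ranges only over genuine paths (all transitions strictly positive), whereas the definition of $r_P$ takes the supremum over all sequences $t_0t_1\cdots$ with $t_0=s_n$ and $t_j\in S$. These two suprema coincide, because any sequence containing a transition of weight $0$ contributes $\bigwedge_i P(t_i,t_{i+1})=0$ and hence cannot raise the supremum; the ``extra'' non-path sequences permitted in the definition of $r_P$ are thus harmless. The same remark settles the boundary case in which $s_n$ has no outgoing transition: then $Cyl(\hat\pi)=\emptyset$, so $Po(Cyl(\hat\pi))=\bigvee\emptyset=0$, while simultaneously $r_P(s_n)=0$, and both sides vanish. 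Note that finiteness of $M$ is not actually needed for the identity itself; it enters only later, when one wishes to evaluate $r_P(s_n)$ via the matrix formula of Proposition~\ref{pr:r-function}.
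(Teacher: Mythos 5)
Your proof is correct. A point of comparison is impossible in the strict sense: this paper does not prove Theorem \ref{th:possibility on cyl} at all, but imports it from \cite{li15} as a known result, so there is no in-paper proof to check you against. That said, your argument is exactly the computation pattern the paper itself relies on in its Appendices: unfold Definition \ref{de:possibility measure}, split the infinite infimum at the cut index $n$, pull the prefix constant out of the supremum using distributivity of $\wedge$ over arbitrary joins in the complete lattice $[0,1]$, and recognize the residual tail supremum as $r_P(s_n)$ — compare the fifth equality in the Appendix A derivation of $Po(s\models\lozenge B)$ and the analogous step in Appendix B, which perform precisely this path-splitting and substitution of $r_P$. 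Your two side remarks are also correct and worth keeping: first, the supremum defining $r_P$ in Eq.~(\ref{eq:r-fucntion}) ranges over arbitrary state sequences while the cylinder supremum ranges over genuine paths, but any sequence containing a zero-possibility transition contributes $0$ to the join, so the two suprema agree, and this same observation disposes of the case where $s_n$ has no outgoing transition (both sides are $0$); second, finiteness of $M$ is not needed for the identity itself, only for evaluating $r_P$ through the matrix formula of Proposition \ref{pr:r-function}, which is presumably why \cite{li15} states the theorem for finite GPKS.
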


\begin{theorem}\cite{li15}\label{th:possibility measure}
$Po$ is a generalized possibility measure on $\Omega=2^{Paths(M)}$, which also satisfies the condition $Po(Paths(M))=\bigvee_{s\in S} I(s)\wedge r_P(s)$.
\end{theorem}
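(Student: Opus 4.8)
The plan is to verify the two assertions separately: first that $Po$ satisfies the two defining axioms of a generalized possibility measure, namely $Po(\emptyset)=0$ and the arbitrary-join law $Po\big(\bigcup_i E_i\big)=\bigvee_i Po(E_i)$, and second that the total possibility collapses to $Po(Paths(M))=\bigvee_{s\in S} I(s)\wedge r_P(s)$. Both parts rest only on Definition~\ref{de:possibility measure}, the definition of $r_P$ in Eq.~(\ref{eq:r-fucntion}), and the lattice structure of $[0,1]$ under $\wedge=\min$, $\vee=\max$; in particular no appeal to Theorem~\ref{th:possibility on cyl} is required. Note that axiom $(2)$, $\Pi(U)=1$, is deliberately \emph{not} demanded here, which is exactly why the total possibility need not equal $1$ when $M$ is not normal.

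For the axioms I would argue straight from Eq.~(\ref{eq:possibility measure-path}) and the definition $Po(E)=\bigvee\{Po^M(\pi)\mid\pi\in E\}$. Since the supremum over the empty family is $0$ by convention, $Po(\emptyset)=0$ is immediate. For the join law, observe that $\pi\in\bigcup_i E_i$ holds iff $\pi\in E_i$ for some $i$, so $\{Po^M(\pi)\mid\pi\in\bigcup_i E_i\}$ is precisely the union of the families $\{Po^M(\pi)\mid\pi\in E_i\}$; taking suprema and using associativity and commutativity of $\bigvee$ in the complete lattice $[0,1]$ yields the claim. This step is purely formal and holds for an arbitrary, not merely countable, index family.

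For the total-possibility formula I would expand the definition and reorganize the join according to the initial state. Writing a generic path as $\pi=s_0s_1\cdots$, Eq.~(\ref{eq:possibility measure-path}) gives $Po(Paths(M))=\bigvee_{\pi}\big(I(s_0)\wedge\bigwedge_{i=0}^{\infty}P(s_i,s_{i+1})\big)$. Grouping paths by their start state $s_0=s$ and applying the infinite distributive law $a\wedge\bigvee_j b_j=\bigvee_j(a\wedge b_j)$ to extract the constant factor $I(s)$, the inner join becomes $I(s)\wedge\bigvee_{\pi\in Paths(s)}\bigwedge_{i=0}^{\infty}P(s_i,s_{i+1})$. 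It then remains to identify this inner supremum with $r_P(s)$ and to join over $s\in S$.

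The main obstacle, modest as it is, sits in that last identification, because Eq.~(\ref{eq:r-fucntion}) defines $r_P(s)$ as a supremum over \emph{all} state sequences starting at $s$, whereas $Paths(s)$ contains only sequences whose every transition possibility is strictly positive. I would settle this by noting that any sequence with $P(s_i,s_{i+1})=0$ for some $i$ contributes $\bigwedge_i P(s_i,s_{i+1})=0$ to the supremum and is therefore harmless, so the supremum over all sequences agrees with the supremum over $Paths(s)$ (both equal $0$ in the degenerate case where $Paths(s)$ is empty). Hence the inner term equals $r_P(s)$, and combining with the outer join gives $Po(Paths(M))=\bigvee_{s\in S}I(s)\wedge r_P(s)$. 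The single point genuinely deserving care is the legitimacy of the infinite distributive and rearrangement steps for arbitrary joins, which is guaranteed because $([0,1],\min,\max)$ is a complete lattice satisfying the infinite distributive law $a\wedge\bigvee_j b_j=\bigvee_j(a\wedge b_j)$.
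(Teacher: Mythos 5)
Your proposal is correct. Note that the paper itself offers no proof of this statement to compare against: Theorem~\ref{th:possibility measure} is quoted from \cite{li15}, and none of the appendices reproves it. On its own merits, your argument is sound and is the natural one: $Po(\emptyset)=0$ by the empty-supremum convention; the join axiom is the purely formal fact that the supremum of a union of value sets is the supremum of the individual suprema; and the total-possibility formula follows by grouping paths by their first state, pulling $I(s)$ out of the inner join via the infinite distributive law $a\wedge\bigvee_j b_j=\bigvee_j(a\wedge b_j)$, which holds in the complete chain $[0,1]$. You also correctly handle the one point where the definitions do not literally match, namely that Eq.~(\ref{eq:r-fucntion}) takes the supremum over \emph{all} state sequences starting at $s$ while $Paths(s)$ requires strictly positive transitions: any sequence with a zero transition contributes $0$ to the supremum, so the two suprema coincide, including in the degenerate case where $Paths(s)$ is empty. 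Your observation that axiom (2) of possibility measures is deliberately dropped, so that $Po(Paths(M))$ may be strictly less than $1$ for non-normal $M$, is exactly the point of the ``generalized'' qualifier in Definition~\ref{de:possibility measure}.
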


\begin{remark}
 {\rm  For path starting in a certain (possibly noninitial) state $s$, the same construction is applied to the GPKS
 $M_s$ that resulting from $M$ by letting $s$ as the unique initial state. Formally, for $M=(S,P,I,AP,L)$ and state $s$,  $M_s$ is defined by $M_s=(S,P,s,AP,L)$ , where $s$ denotes an initial normal distribution with only one initial state $s$.}
\end{remark}

\section{Fuzzy linear-time properties}\label{sec4}

       In this section, let us first present the notion of fuzzy linear-time properties in a GPKS. Then we give two description methods of fuzzy linear-time properties: fuzzy linear-time properties described by generalized possibilistic linear-temporal logic, and fuzzy linear-time properties accepted by fuzzy finite automata.

 \subsection{Fuzzy linear-time properties and generalized possibilistic linear-temporal logic}\label{sec1:Linear}

      Some of the relevant definition of generalized possibilistic LTL are presented as follows:

   \begin{definition}
    (c.f. \cite{baier08})(Syntax of GPoLTL) {\rm Generalized possibilistic linear-temporal logic (GPoLTL, in short) formulae over the set $AP$ of atomic propositions are the same as LTL formulae, which are formed according to the following grammar,
\begin{equation}
\varphi ::=true | a|\varphi_1\wedge\varphi_2| \neg\varphi | \bigcirc\varphi|\varphi_1\sqcup\varphi_2
\nonumber
\end{equation}
where $a\in AP$.}
\end{definition}

GPoLTL formulae have the similar intuitive interpretation as those of LTL in Section 2.1, combining with the possibility information of the considered GPKS. Let us give the semantics of GPoLTL in two aspects in the following. The first one is its path semantics with respect to a GPKS.

\begin{definition}\label{de:path semantics of GPoLTL}
   (Path semantics of GPoLTL) {\rm Assume $\pi=s_0s_1s_2\cdots$ is a path starting $s_0$ in a GPKS $M$, $\pi_i=s_is_{i+1}s_{i+2}\cdots$, $\pi[i]=s_i$,
 $\varphi$ is a GPoLTL formula, its path semantics over $M$ is a fuzzy set on ${\rm Paths}(M)$, i.e.,  $||\varphi||_M: {\rm Paths}(M)\longrightarrow [0,1]$, which is defined recursively as follows,

$||{\rm true}||_M(\pi)=1$;

$||a||_M(\pi)=L(s_0, a)$;

$||\varphi_1\wedge \varphi_2||_M(\pi)=||\varphi_1||_M(\pi)\wedge ||\varphi_2||_M(\pi)$;

$||\neg\varphi||_M(\pi)=1-||\varphi||_M(\pi)$;

$||\bigcirc\varphi||_M(\pi)=||\varphi||_M(\pi_1)$;

$||\varphi_1\sqcup \varphi_2||_M(\pi)=\bigvee_{j\geq 0}(||\varphi_2||_M(\pi_j)\wedge \bigwedge_{i<j}||\varphi_1||_M(\pi_i))$.}

\end{definition}

       The until operator allows to derive the temporal modalities $\lozenge$ (``eventually'', sometimes in the future) and  $\square$
 (``always'', from now on forever) as usual:
\begin{eqnarray*}
\lozenge\varphi=true\sqcup\varphi,\square\varphi=\neg\lozenge\neg\varphi.
\end{eqnarray*}

GPoLTL formulae stand for properties of paths of a GPKS, in fact their traces, which is defined as follows.

\begin{definition}\label{de:trace of a path}
{\rm Let $M=(S,P,I,AP,L)$ be a GPKS without terminal states, $i.e$., for any state $s$, there exists a state $t$
such that $P(s,t)>0$, i.e., $P$ is total. The trace of the infinite path fragment $\pi=s_0s_1\cdots$ is  defined as $trace(\pi)=L(s_0)L(s_1)\cdots$. For convenience, we also use $L(\pi)$ to represent the trace of $\pi$. The trace of the finite path fragment $\hat{\pi}=s_0s_1\cdots s_n$ is defined as  $L(\hat{\pi})=L(s_0)L(s_1)\cdots L(s_n)$.}
\end{definition}

 The set of traces of a set $\Pi$  of paths is defined in the usual way, $trace(\Pi)=\{trace(\pi)|\pi\in\Pi\}$. Let $Traces(s)$ denote
the set of traces originated at $s$, and $Traces(M)$ the set of traces of the GPKS $M$, $i.e$., $Traces(s)=trace(Paths(s))$  and $Traces(M)=\cup_{s\in S}Traces(s)$.

The second semantics of GPoLTL is its language semantics as follows.

\begin{definition}\label{de:language semantics of GPoLTL} (Language semantics of GPoLTL) {\rm Let $\varphi$ be a GPoLTL formula. The language semantics of $\varphi$ over the alphabet $\Sigma=[0,1]^{AP}$ (or $\Sigma=l^{AP}$ for some finite subset $l\subseteq [0,1]$) is a fuzzy $\omega$-language, i.e., $||\varphi||: \Sigma^{\omega}\longrightarrow [0,1]$, which is defined iterately as follows: for $\sigma=A_0A_1\cdots\in \Sigma^{\omega}$, write $\sigma_j=A_jA_{j+1}\cdots$,

$||{\rm true}||(\sigma)=1$;

$||a||(\sigma)=A_0(a)$;

$||\varphi_1\wedge \varphi_2||(\sigma)=||\varphi_1||(\sigma)\wedge ||\varphi_2||(\sigma)$;

$||\neg\varphi||(\sigma)=1-||\varphi||(\sigma)$;

$||\bigcirc\varphi||(\sigma)=||\varphi||(\sigma_1)$;

$||\varphi_1\sqcup \varphi_2||(\sigma)=\bigvee_{j\geq 0}(||\varphi_2||(\sigma_j)\wedge \bigwedge_{i<j}||\varphi_1||(\sigma_i))$.

$||\lozenge\varphi||(\sigma)=\bigvee_{j\geq 0}||\varphi||(\sigma_j)$.

$||\square\varphi||(\sigma)=\bigwedge_{j\geq 0}||\varphi||(\sigma_j)$.}

\end{definition}

Although the language semantics of GPoLTL formulae is independent of the GPKS models, it has closed connection with the path semantics of GPoLTL formulae as shown below: $$||\varphi||(L(\pi))=||\varphi||_M(\pi)$$ for any path $\pi$ in GPKS $M$. We shall use these two semantics alternately in the paper.

Now let us define the notion of {\sl fuzzy linear-time property}, which is one of the main notions of this paper.

  \begin{definition}\label{de:fuzzy lt property}
   {\rm A fuzzy (or possibilistic) linear-time property ($LT$ property) over the set of atomic propositions $AP$ is a function, $P: \Sigma^{\omega}\longrightarrow [0,1]$, where $\Sigma=[0,1]^{AP}$ or $\Sigma=l^{AP}$ for some finite subset $l\subseteq [0,1]$.}
\end{definition}

{\bf For any GPoLTL formula $\varphi$, its language semantics $||\varphi||$ is obviously a fuzzy linear-time property over $\Sigma=l^{AP}$.}

 {\bf Recall the patient example considered in the Introduction part, the description ``After a week of treatment, the patient can basically recover'' can be represented by a GPoLTL formula $\lozenge^{\leq 7}br$, where $br$ denotes the fuzzy proposition ``the patient basically recover'', and $\lozenge^{\leq 7}br= \vee_{i=0}^7\bigcirc^i br$, $\bigcirc^i br$ is inductively defined as $\bigcirc^0 br=br$ and $\bigcirc^{i+1}br=\bigcirc(\bigcirc^i br)$. If the states of the patient have three status ``poor'', ``fine'' and ``excellent'', then the state $br$ of the patient is a fuzzy proposition over the atomic proposition $\{poor, fine, excellent\}$. For example, we can assume that $br=1/fine+0.8/excellent$,  then $\lozenge^{\leq 7}br$ is a GPoLTL formula but not an LTL formula.}

{\bf Fuzzy linear-time properties (or GPoLTL formulae) are language-based or path-based, to verify whether a fuzzy linear-time property holds in a GPKS, we need the state-based interpretation of fuzzy linear-time properties (or GPoLTL formulae). We present the state-based interpretation of fuzzy linear-time properties as follows.}

    \begin{definition}\label{de:the possibility measure of lt property}
  {\rm  Let $P$ be a fuzzy linear-time property over $AP$ and $M=(S,P,I,AP,L)$ be a GPKS without terminal states. Then,
the possibility of $M=(S,P,I,AP,L)$  satisfies $P$ at state $s$, denoted  $Po^M(s\models P)$, is defined as,

$$Po^M(s\models P)=\bigvee_{\pi\in Paths(s)}Po^{M_s}(\pi)\wedge P(L(\pi)).$$

{\bf Back to the patient's example, $\lozenge^{\leq 7} br$ denotes a GPoLTL formula to describe the patient being in the state $br$, if the doctor's threshold of the ``basically recovery'' is 0.8, and if  $Po(patient\models\lozenge^{\leq 7} br)\geq 0.8$, then the doctor can say that ``After a week of treatment, the patient basically recovered''.}

Dually, the necessity measure of $M=(S,P,I,AP,L)$  satisfies $P$ at state $s$, denoted  $Ne^M(s\models P)$, is defined as,

$Ne^M(s\models P)=1- Po^M(s\not\models P)=1-Po^M(s\models \neg P)=\bigwedge_{\pi\in Paths(s)}\neg Po^{M_s}(\pi)\vee P(L(\pi))=\bigwedge_{\pi\in Paths(s)}Po^{M_s}(\pi)\rightarrow P(L(\pi)),$

\noindent where $a\rightarrow b=(1-a)\vee b$.}
\end{definition}
In particular, if $P$ is a crisp linear-time property over $\Sigma$, then

$Po^M(s\models P)=\bigvee\{Po^{M_s}(\pi) | \pi\in Paths(s)$ and $L(\pi)\in P\}.$

\noindent In this case,
$Po^M(s\models P)=1$ iff $\exists\pi\in Paths(s)$ such that $L(\pi)\in P$, and

$Ne^M(s\models P)=\bigwedge\{1- Po^M(\pi) | \pi\in Paths(s)$ and $L(\pi)\not\in P\}$.

\noindent In this case, $Ne^M(s\models P)=1$ iff $\forall\pi\in Paths(s)$, $L(\pi)\in P$.

Furthermore, for a GPKS $M=(S,P,I,AP,L)$ and a fuzzy linear-time property $P$, the possibility of $M$ satisfies $P$ at initial state $I$, denoted $Po^M(I\models P)$ is defined as,
$$Po^M(I\models P)=\bigvee_{\pi\in Paths(M)}Po^{M}(\pi)\wedge P(L(\pi)).$$

Then it can be readily verified that $Po^M(I\models P)=\bigvee_{s\in S}I(s)\wedge Po^M(s\models P)$, and $Po^M(s\models P)=Po^{M_s}(\{s\}\models P)$.

\subsection{Fuzzy linear-time properties and fuzzy finite automata over finite words and infinite words}

Fuzzy linear-time properties can be seen as fuzzy languages over the set $\Sigma=l^{AP}$ for a finite subset $l$ of $[0,1]$. Fuzzy automata are powerful tools to accept fuzzy languages. In this subsection, we are particularly interested in the fuzzy linear-time properties which can be accepted by fuzzy automata. For this purpose, let us recall the notion of fuzzy finite automata theory (see \cite{li051} and references therein). In this section, we always assume that $\Sigma=l^{AP}$.

\begin{definition}\label{def:FFA}
{\rm A {\sl fuzzy finite automaton} is a
5-tuple ${\cal A}=(Q,\Sigma,\delta,J,F)$, where $Q$ denotes a finite
set of states, $\Sigma$ a finite input alphabet, and $\delta$ a
fuzzy subset of $Q\times \Sigma\times Q$, that is, a mapping
from $Q\times \Sigma\times Q$ into $[0,1]$, and it is called the
fuzzy transition relation. Intuitively, for
any $p,q\in Q$ and $\sigma\in \Sigma$, $\delta(p,\sigma,q)$ stands
for the possibility that input $\sigma$ causes
state $p$ to become $q$. $J$ and $F$ are fuzzy subsets of $Q$,
that is, mappings from $Q$ into $[0,1]$, which represent the initial
state and final state, respectively. For each $q\in Q$, $J(q)$
indicates the possibility that $q$ is an initial state, $F(q)$ expresses the possibility that $q$ is a finial state.}

\end{definition}

The {\sl language} accepted by a fuzzy finite automaton ${\cal A}$, which is a
fuzzy language $L({\cal A}): \sa\rw [0,1]$, is defined as follows, for any
word $w=\sigma_1\sigma_2\cdots\sigma_k\in\sa$,

$L({\cal A})(w)=\bigvee\{J(q_0)\wedge\bigwedge_{i=0}^{k-1}
\delta(q_i,\sigma_{i+1},q_{i+1})\wedge F(q_{k})| q_i\in Q$ for any
$i\leq k\}$.

For a fuzzy language $f: \sa\rw [0,1]$, if there exists a fuzzy finite automaton ${\cal
A}$ such that $f=L({\cal A})$, then $f$ is called a {\sl fuzzy regular language} over $\Sigma$.

In a fuzzy finite automaton ${\cal A}=(Q,\Sigma,\delta,J,F)$, if $\delta$ and $J$
are deterministic, i.e., there exists a unique state $q_0$ such that $J(q_0)\not=0$ and $J(q_0)=1$, and for any $q\in Q$ and $\sigma\in \Sigma$, there is a unique state $p$ such that $\delta(q,\sigma,p)=1$, then ${\cal A}$ is called deterministic fuzzy automaton. In this case, we also denote $p=\delta(q,\sigma)$ as that in classical case.

If ${\cal A}$ is a deterministic fuzzy finite automaton, then for any input
$w=\sigma_1\sigma_2\cdots\sigma_n$ $\in\sa$, we have

$L({\cal A})(w)=F(\ds(q_0,w))$,

\noindent where $\ds(q_0,w)$ denotes those states can transform from $q_0$ by the input $w$. It is well known that deterministic fuzzy finite automata are equivalent to fuzzy finite automata, i.e., they accept the same class of fuzzy languages (\cite{li051}).

We need the notion of fuzzy B\"{u}chi automata,
which can be found in Ref.\cite{kuich06}. We present this notion
with some minor changes.

\begin{definition}\label{de:buchi}

{\rm {\sl A fuzzy
B\"{u}chi automaton}  is a 5-tuple ${\cal
A}=(Q,\Sigma,\delta,I,F)$ which is the same as a fuzzy finite automaton, the
difference is the language accepted by ${\cal A}$, which is a
{\sl fuzzy $\omega$-language} $L_{\omega}({\cal A}): \Sigma^{\omega}\rw [0,1]$
defined as follows for any infinite sequence
$w=\sigma_1\sigma_2\cdots\in \Sigma^{\omega}$,

$L_{\omega}({\cal A})(w)=\bigvee\{I(q_0)\wedge\bigwedge_{i\geq
0}\delta(q_i,\sigma_{i+1}, q_{i+1})\wedge\bigwedge_{i\geq 0}\bigvee_{j\geq i}F(q_j) |
q_i\in Q$ for any $i\geq 0\}$.

For a fuzzy $\omega$-language $f: \Sigma^{\omega}\rw [0,1]$, if there
exists a fuzzy  B\"{u}chi automaton ${\cal A}$ such that $f=L_{\omega}({\cal A})$,
then $f$ is called an {\sl fuzzy $\omega$-regular language} over $\Sigma$.

Similarly, we have the notion of deterministic fuzzy B\"{u}chi finite automata. In general, deterministic fuzzy B\"{u}chi finite automata are not equivalent to fuzzy B\"{u}chi finite automata.}

\end{definition}

For a fuzzy linear-time property $P$, if $P$ can be accepted by a fuzzy B\"{u}chi finite automaton, then $P$ is called a fuzzy $\omega$-regular property. In fact, all fuzzy linear-time properties described by GPoLTL are fuzzy $\omega$-regular properties\footnote{X.Wei,Y.Li, Infinite fuzzy alternating automata, preprint.}.

\section{Possibility measures of fuzzy linear-time properties}\label{sec5}

The quantitative model-checking problem that we are confronted with is: given a GPKS $M$ and a fuzzy linear-time property
 $P$, compute the possibility (necessity) measure for the set of paths in $M$ for which $P$ holds. We consider some special cases: properties of reachability, always reachability, constraint reachability, repeated reachability and pesistence to fuzzy states, and more general fuzzy regular linear-time properties and fuzzy $\omega$-regular linear-time properties.

\subsection{Reachability possibility and always reachability possibility}

     One of the elementary questions for the quantitative analysis of systems modeled by GPKSs is to compute
the possibility of reaching a fuzzy state $B$, where $B$ may represent a set of certain bad states which should be visited only with some small possibility, or dually, a set of good states which should rather be visited frequently with some high possibility. We use $B: S\longrightarrow [0,1]$ to denote this possibility.  For the given GPKS $M$, if we reconsider in $M$ as $AP=S$ and $L(s)=\{s\}$ for any state $s$, then $\lozenge B$ and $\square B$ can be seen as GPoLTL formulae over the atomic proposition set $S$, where for $\pi=s_0s_1\cdots\in S^{\omega}$,  $\lozenge B(\pi)=\bigvee_{i\geq 0} B(s_i)$, and $\square B(\pi)=\bigwedge_{i\geq 0} B(s_i)$. And then $\lozenge B$ and $\square B$ can be seen as fuzzy linear-time properties over the state set $S$

This subsection focuses on computing $Po(s\models\lozenge B)$ and $Po(s\models\square B)$. The main result can be summed up as follows.

\begin{theorem}\label{th:reachability possibility}
Let $M$ be a GPKS. Write $Po(\lozenge B)=(Po(s\models\lozenge B))_{s\in S}$, and $Po(\square B)=(Po(s\models \square B))_{s\in S}$, then we have
\begin{equation}\label{eq:expression of reachability}
Po(\lozenge B)=P^*\circ D_B\circ r_P,
\end{equation}
\vspace{-1cm}
\begin{equation}\label{eq:expression of reachability}
Po(\square B)=\nu Z.f_B(Z),
\end{equation}
where $D_B$ denotes the diagonal matrix $diag(B(s))_{s\in S}$, $f_B(Z)=B\wedge P\circ D_Z\circ r_P$ and $\nu.f_B(Z)$ denotes the greatest fixed point of the operator $f_B(Z)$.
\end{theorem}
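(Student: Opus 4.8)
The plan is to reduce both identities to the path semantics of Definition \ref{de:the possibility measure of lt property}. Since in $M_s$ the initial distribution is normal at $s$, for any $\pi=s_0s_1\cdots\in Paths(s)$ we have $Po^{M_s}(\pi)=\bigwedge_{i\ge0}P(s_i,s_{i+1})$, so that
\[ Po(s\models\lozenge B)=\bigvee_{\pi\in Paths(s)}\Big(\bigwedge_{i\ge0}P(s_i,s_{i+1})\Big)\wedge\Big(\bigvee_{j\ge0}B(s_j)\Big),\qquad Po(s\models\square B)=\bigvee_{\pi\in Paths(s)}\Big(\bigwedge_{i\ge0}P(s_i,s_{i+1})\Big)\wedge\Big(\bigwedge_{j\ge0}B(s_j)\Big). \]
Throughout I take $M$ finite, in keeping with the matrix notation and Proposition \ref{pr:r-function}.

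For $\lozenge B$, I would first use the distributivity of $\wedge$ over $\bigvee$ in $[0,1]$ to pull the disjunction $\bigvee_{j}$ outside and reindex by the state $t=s_j$ visited at time $j$. This splits each path at position $j$ into an independent prefix of length $j$ from $s$ to $t$ and an infinite suffix from $t$; because $\wedge$ distributes over suprema, the optimum factorizes as $\bigvee_{\pi:\,s_j=t}\bigwedge_{i}P(s_i,s_{i+1})=P^{j}(s,t)\wedge r_P(t)$, where $P^{j}$ is the $j$-th max--min power and $r_P(t)$ is exactly the suffix optimum defining $r_P$. Summing over $j$ gives $\bigvee_{j}P^{j}(s,t)=P^{*}(s,t)$, hence $Po(s\models\lozenge B)=\bigvee_{t}P^{*}(s,t)\wedge B(t)\wedge r_P(t)$, which is the $s$-entry of $P^{*}\circ D_B\circ r_P$. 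This half is purely algebraic.

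For $\square B$ I must identify $Po(\square B)$ with the greatest fixed point of $f_B$. First I would verify it is a fixed point: unfolding the leading state of each path gives the one-step law $Po(s\models\square B)=B(s)\wedge\bigvee_{t}P(s,t)\wedge Po(t\models\square B)$; and since $Po(t\models\square B)\le r_P(t)$ always (the $\square B$-value of a path never exceeds its bare transition possibility), inserting the extra factor $r_P(t)$ changes nothing, so $Po(\square B)=f_B(Po(\square B))$. To see it is the \emph{greatest} fixed point I would invoke the Knaster--Tarski/Kleene theorem: $f_B$ is monotone and built from finite $\wedge,\vee$ over a finite $S$, hence co-continuous, so $\nu Z.f_B(Z)=\bigwedge_{n}f_B^{n}(\mathbf{1})$. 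A short induction (using again $r_P(u)=\bigvee_{v}P(u,v)\wedge r_P(v)$) evaluates the iterates as
\[ f_B^{n}(\mathbf{1})(s)=\bigvee_{\pi\in Paths(s)}\Big(\bigwedge_{i<n}B(s_i)\Big)\wedge\Big(\bigwedge_{i\ge0}P(s_i,s_{i+1})\Big), \]
the role of the $r_P$-factor being precisely to restore the full infinite transition tail that a naive operator without $r_P$ would drop. Writing $a_n(\pi)$ for the integrand, $a_n(\pi)$ decreases in $n$ with $\bigwedge_n a_n(\pi)$ equal to the $\square B$-value of $\pi$, so $\nu Z.f_B(Z)(s)=\bigwedge_{n}\bigvee_{\pi}a_n(\pi)\ge\bigvee_{\pi}\bigwedge_n a_n(\pi)=Po(s\models\square B)$ is immediate.

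The main obstacle is the reverse inequality $\bigwedge_{n}\bigvee_{\pi}a_n(\pi)\le\bigvee_{\pi}\bigwedge_n a_n(\pi)$, i.e. the interchange of an infinite meet with a supremum over infinitely many paths, which fails in general and must be forced by finiteness rather than by continuity of the sup. Writing $c$ for the left-hand side, each iterate $f_B^{n}(\mathbf{1})(s)$ is a maximum over the finitely many length-$n$ paths (with $r_P$ a fixed per-state value) and is therefore attained by a finite path $\rho_n$ witnessing $f_B^{n}(\mathbf{1})(s)\ge c$. Consequently, for every $m$ there is a length-$m$ path from $s$ that stays in the threshold graph $H_c$ with vertices $\{v:B(v)\ge c\}$ and edges $\{(v,w):P(v,w)\ge c\}$. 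A finite graph admitting arbitrarily long walks from $s$ lets $s$ reach a cycle; unfolding that lasso produces a single infinite path with all $B(s_i)\ge c$ and all $P(s_i,s_{i+1})\ge c$, whence $Po(s\models\square B)\ge c$. This K\"onig-type extraction is exactly where finiteness of $M$ is used and is the crux of the proof; the remaining manipulations are routine lattice algebra.
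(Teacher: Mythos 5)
Your proof is correct, and it splits naturally in two against the paper. For the reachability identity $Po(\lozenge B)=P^*\circ D_B\circ r_P$, your argument is essentially the computation in the paper's Appendix A: distribute $\wedge$ over the supremum, split each path at the time $j$ at which $B$ is sampled, recognize the prefix optimum as $P^{j}(s,t)$ and the suffix optimum as $r_P(t)$, and take $\bigvee_j P^j=P^*$. The genuine difference is in the second identity: the paper gives no proof of it at all, merely asserting ``as shown in \cite{li15}'' that $Po(\square B)$ is the greatest fixed point of $f_B$, whereas you supply a complete, self-contained argument. Your route --- monotonicity plus co-continuity of $f_B$ (finite maxima commute with infima of decreasing chains, by pigeonhole on the finite state set) yielding $\nu Z.f_B(Z)=\bigwedge_n f_B^{n}(\mathbf{1})$, the closed form of the iterates (with the accurate observation that the $r_P$ factor is exactly what keeps the infinite transition tail inside each iterate), the easy inequality $\bigwedge_n\bigvee_\pi\ge\bigvee_\pi\bigwedge_n$, and the reverse exchange forced by the lasso/K\"{o}nig extraction in the threshold graph $H_c$ --- is sound, and it correctly isolates where finiteness of $M$ is indispensable. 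What the paper's citation buys is brevity; what your argument buys is that the theorem becomes verifiable within the paper itself, and it makes explicit a hypothesis the statement suppresses (the theorem says only ``Let $M$ be a GPKS,'' yet both the matrix expression for $r_P$ via Proposition \ref{pr:r-function} and your extraction argument require $M$ finite). Two minor points to tidy: the extracted lasso is a genuine path of $M$ (i.e., has strictly positive transition possibilities) only when the threshold $c>0$, so the trivial case $c=0$ should be noted separately; and the fixed-point verification via the one-step law, while correct, is redundant once you have identified $\bigwedge_n f_B^n(\mathbf{1})$ with $Po(\square B)$, since Kleene iteration already certifies that this common value is $\nu Z.f_B(Z)$.
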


The proof is placed in Appendix A.

\subsection{Constrained reachability possibility}

	Let $M=(S,P,I,AP,L)$ be a GPKS and $B,C: S\longrightarrow [0,1]$ be two fuzzy states. Consider the event of reaching $B$ via a finite path fragment which ends in fuzzy state $B$, and visits only fuzzy state $C$ prior to reaching $B$. This event is just $C\sqcup B$. The event $\lozenge B$ considered in Section 4.1 agrees with $S\sqcup B$. For $n\geq0$, the event
$C\sqcup^{\leq n} B$ has the same meaning as $C\sqcup B$, except that it is required to reach $B$ (via fuzzy state $C$) within $n$ steps. Formally, $C\sqcup^{\leq n} B$ is the union of the basic cylinders spanned by path fragments $s_0\cdots s_k$ such that $k\leq n$ with degree $C(s_i)$ for all $0\leq i< k$ with degree $B(s_k)$.

For two fuzzy states $B,C:S\longrightarrow [0,1]$, let us see how to compute $Po(s\models C\sqcup^{\leq n}B)$ and $Po(s\models C\sqcup B)$ using matrix operations.
\begin{eqnarray*}
Po(s\models C\sqcup^{\leq n}B )&=&\bigvee_{\pi=ss_1s_2\cdots\in Paths(s)}Po^{M_s}(\pi)\wedge||C \sqcup^{\leq n}B )||(\pi)\\
&=&\bigvee_{\pi=ss_1s_2\cdots\in Paths(s)}P(s,s_1)\wedge P(s_1,s_2)\cdots\wedge ( \bigvee_{0\leq j\leq n}B(s_j)
\wedge\bigwedge_{i<j}C(s_i))\\
&=&( B(s)\wedge r_P(s))\vee(\bigvee_{0<j\leq n}C(s)\wedge\bigwedge_{k<j}P(s_{k-1},s_k)\wedge C(s_k)\\
&&\wedge P(s_{j-1},s_j)\wedge B(s_j)\wedge r_P(s_j))\\
&=&(\bigvee_{i=0}^n(D_{C }\circ P)^i\circ D_{B }\circ r_P)(s).
\end{eqnarray*}
In the matrix-notation we have a compact expression as follows,
\begin{equation}\label{eq:expression of restricted until}
Po(C \sqcup^{\leq n}B)=(Po(s\models C\sqcup^{\leq n}B))_{s\in S}=\bigvee_{i=0}^n(D_{C }\circ P)^i\circ D_{B}\circ r_P.
\end{equation}

If we let $N=|S|$, we know that $\bigvee_{i=0}^n(D_{C }\circ P)^i=(D_{C }\circ P)^{\ast}$, the reflexive and transitive closure of the fuzzy matrix $D_{C }\circ P$, for any $n\geq N$. In this case, we have
\begin{equation}\label{eq:expression of limit restricted until}
Po(C \sqcup^{\leq n}B)=(D_{C }\circ P)^{\ast}\circ D_{B }\circ r_P.
\end{equation}

By the definition of $C\sqcup B$, we can see that $Po(s\models C \sqcup B)=\lim_{n\rw \infty}||Po(C\sqcup^{\leq n}B)||(s)$ for any state $s$. It follows that
\begin{equation}\label{eq:expression of until}
Po(C\sqcup B)=(Po(s\models C \sqcup B))_{s\in S}=(D_{C }\circ P)^{\ast}\circ D_{B}\circ r_P.
\end{equation}

\begin{remark} {\rm (1) Compared with the work in \cite{li13}, where the computing of $Po(C\sqcup B)$ needs to solve fuzzy relational equations iteratively even for crisp state sets $B$ and $C$, Eq.(\ref{eq:expression of restricted until}) and Eq.(\ref{eq:expression of until}) are more succinct and compact which involve only fuzzy matrix operations.

(2)For a finite GPKS $M$, the fuzzy matrixes $P, C, B$ are finite. Since the operations involved
in the matrix operations in Eq.(\ref{eq:expression of restricted until}) and Eq.(\ref{eq:expression of until}) are maximum and minimum operations over the unit interval [0,1], it follows that the time complexity of matrix operations in Eq.(\ref{eq:expression of restricted until}) and Eq.(\ref{eq:expression of until}) are polynomial of the input $|S|$. Therefore, we can effectively compute the constrained reachability possibility.}
\end{remark}

\begin{example}\label{ex:constraint reachability}

 {\rm	Consider the GPKS $M$ in Example \ref{ex:gpks}, the event of interest is
$C\sqcup B$ where $C=1/s_3$, $B=(L(s,b))_{s\in S}$. We shall compute the bounded constrained reachability possibility $x_s=Po(s\models C\sqcup B)$ for all states $s\in S$.

     Using the state order $s_0< s_1<s_2<s_3$, the possibility matrix $P$, the vectors $C$ and $B$ are given by,

$P=\left(\begin{array}{cccc}
0&0.8&0&0.9\\
0&0&0.2&0.5\\
0&0&0.9&0\\
0&0.7&0.6&0
\end{array}
\right)$,$C=\left(\begin{array}{cccc}0\\
0\\
0\\
1
\end{array}
\right)$,$B=\left(\begin{array}{cccc}0.8\\
1\\
0\\
0.5
\end{array}
\right)$.

By a simple calculation, we have
$Po(C\sqcup B)=(D_C\circ P)^*\circ D_B\circ r_P=\left(\begin{array}{cccc}0.6\\
0.5\\
0\\
0.5
\end{array}
\right)$.}
\end{example}

\subsection{Repeated reachability possibility and persistence possibility}

      This section focuses on quantitative repeated reachability properties and persistence properties of GPKS which can
be verified using graph analysis, $i.e$, by just considering the underlying digraph of the finite GPKS, combining the transition possibility distribution.
	
For a GPKS $M$, let $B: S\longrightarrow [0,1]$ be a fuzzy state in $M$, and $s$ a state in $M$.  For the event $\square\lozenge B$, $i.e$., the set of all paths that visit $B$ infinitely, and the event $\lozenge\square B$, i.e., the set of all paths that visit $\neg B$ finitely, let us calculate $Po(s\models\square\lozenge B)$ and $Po(s\models\lozenge\square B)$, where for a fuzzy state $B:S\longrightarrow [0,1]$, and for $\pi=s_0s_1\cdots\in S^{\omega}$, $$\square\lozenge B (\pi)=\bigwedge_{i\geq 0}\bigvee_{j\geq i}B(s_j),$$ and $$\lozenge\square B (\pi)=\bigvee_{i\geq 0}\bigwedge_{j\geq i} B(s_j).$$

The main result is summed up as follows,

\begin{theorem}\label{th:repeated reachability and pesistence possibility}
       Let $M$ be a finite GPKS and $B: S\longrightarrow [0,1]$ a fuzzy state. Then we have,
\begin{equation}\label{eq:repeated reachability possibility}
Po(\square\lozenge B)=P^+\circ diag(P^+(t,t))_{t\in S}\circ B,
\end{equation}
\vspace{-1cm}
\begin{equation}\label{eq:pesistence possibility}
Po(\lozenge\square B)=P^*\circ r_{D_B\circ P}.
\end{equation}
\end{theorem}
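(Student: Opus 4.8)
Both identities are proved by establishing the two opposite inequalities between the path-supremum $Po(s\models\cdot)$ and the stated matrix expression, read off componentwise; the finiteness of $S$ (hence of the value set $B(S)$ and of all the max-min closures) is what makes every relevant supremum and infimum attained at a finite stage. The first step I would take is to rewrite the two temporal quantities along a fixed path in terms of the states visited infinitely often. Writing $Inf(\pi)$ for the set of states occurring infinitely often in $\pi=s_0s_1\cdots$, one checks directly from $\square\lozenge B(\pi)=\bigwedge_i\bigvee_{j\ge i}B(s_j)$ and $\lozenge\square B(\pi)=\bigvee_i\bigwedge_{j\ge i}B(s_j)$ that, because the sequence $B(s_0),B(s_1),\ldots$ takes only finitely many values,
$$\square\lozenge B(\pi)=\bigvee_{t\in Inf(\pi)}B(t),\qquad \lozenge\square B(\pi)=\bigwedge_{t\in Inf(\pi)}B(t),$$
and moreover there is a finite index $m_0$ past which the relevant inner inf (resp.\ sup) is already realized. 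These two clean descriptions are the engine of the whole argument.

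For Eq.~(\ref{eq:repeated reachability possibility}), the lower bound ``$\ge$'' comes from an explicit witness: for each $t$ I take a finite fragment from $s$ to $t$ realizing $P^+(s,t)$, followed by endless repetition of a cycle at $t$ realizing $P^+(t,t)$; this path visits $t$ infinitely often, so its $\square\lozenge B$-value is at least $B(t)$ and its possibility equals $P^+(s,t)\wedge P^+(t,t)$, giving $Po(s\models\square\lozenge B)\ge\bigvee_t P^+(s,t)\wedge P^+(t,t)\wedge B(t)$. For ``$\le$'' I take an arbitrary $\pi$ and pick $t^*\in Inf(\pi)$ with $B(t^*)=\square\lozenge B(\pi)$; since $t^*$ recurs, $\pi$ contains a fragment from $s$ to $t^*$ and a fragment from $t^*$ back to $t^*$, and the minimum of the transition values along each such fragment is at least $Po^{M_s}(\pi)$, whence $Po^{M_s}(\pi)\le P^+(s,t^*)\wedge P^+(t^*,t^*)$ and the desired bound follows. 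Componentwise this is exactly $P^+\circ diag(P^+(t,t))_{t\in S}\circ B$.

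For Eq.~(\ref{eq:pesistence possibility}) the same two-sided scheme applies, now splitting a path into a finite prefix and an infinite tail on which $B$ is ``always'' large. Setting $Q=D_B\circ P$ so that $Q(u,v)=B(u)\wedge P(u,v)$, one sees that $r_{Q}(t)=\bigvee\{\bigwedge_{i\ge0}B(w_i)\wedge P(w_i,w_{i+1}):w_0=t\}$ is precisely the largest possibility of an infinite path from $t$ along which $\square B$ holds. The lower bound is then obtained by concatenating an optimal $P^*(s,t)$-prefix with an optimal $r_Q(t)$-tail. For the upper bound I use the finite cutoff $m_0$ supplied by the first step: setting $t=s_{m_0}$, the prefix from $s$ to $t$ gives $P^*(s,t)\ge Po^{M_s}(\pi)$, while the tail from $t$ witnesses $r_Q(t)\ge\lozenge\square B(\pi)\wedge Po^{M_s}(\pi)$, so that taking the minimum and then $\bigvee_t$ yields $(P^*\circ r_{D_B\circ P})(s)\ge Po^{M_s}(\pi)\wedge\lozenge\square B(\pi)$. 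Here Proposition~\ref{pr:r-function}, applied to the matrix $Q$, guarantees that $r_Q(t)$ is attained by an actual path, which is what legitimizes the concatenation in the lower bound.

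The step I expect to be the main obstacle is the upper bound in the persistence case: one must convert the ``eventually'' quantifier $\bigvee_i$ in $\lozenge\square B$ into a single finite splitting index, and this is exactly where the finiteness of $B(S)$ is indispensable, since it guarantees that $\bigwedge_{j\ge m}B(s_j)$ stabilizes at some finite $m_0$, so that the infinite tail genuinely realizes $r_{D_B\circ P}(s_{m_0})$ rather than only approximating it. Absent this stabilization the tail possibility could fail to be attained and the clean matrix identity would degrade to an inequality.
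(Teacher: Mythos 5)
Your proof is correct and follows essentially the same route as the paper's Appendix B: your $Inf(\pi)$ reduction is exactly the paper's lemma $Po(s\models\square\lozenge B)=\bigvee_{t\in S}B(t)\wedge Po(s\models\square\lozenge t)$, your lasso-witness derivation of $Po(s\models\square\lozenge t)=P^+(s,t)\wedge P^+(t,t)$ re-proves the identity that the paper simply cites from \cite{li13}, and your prefix/tail splitting with the tail captured by $r_{D_B\circ P}$ is precisely the decomposition the paper writes as a chain of equalities ending in $Po(s\models\lozenge\square B)=\bigvee_{i\geq 0}P^i\circ r_{D_B\circ P}(s)=P^*\circ r_{D_B\circ P}(s)$. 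The remaining differences are stylistic rather than substantive --- you argue by two opposite inequalities with explicit witnesses and a stabilization index $m_0$, whereas the paper's equational computation implicitly relies on distributivity of $\wedge$ over suprema in $[0,1]$ (so the stabilization you call indispensable for the upper bound can in fact be avoided; finiteness is genuinely needed only for the attainment/witness arguments in your lower bounds).
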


The proof can be seen in Appendix B.

Since the calculation of $P^+$ and $P^*$ can be done by some simple graph-search algorithm combining with the minimum and maximum operations
in the unit interval [0,1] or some simple fuzzy matrix algorithms, then $Po(\square\lozenge B)$ and $Po(\lozenge\square B)$ can be effectively calculated.

In the probabilistic model checking of repeated reachability and persistence linear-time properties (see Ref.\cite{baier08}),
a different approach which is not appropriate to possibilistic model checking is adopted, which is more complex than our method for
the possibilistic model checking of repeated reachability and persistence to fuzzy states fuzzy linear-time properties.

\begin{example}\label{ex:repeated possibility}
    {\rm   Consider the GPKS $M$ in Example \ref{ex:gpks}. By a simple calculation, the corresponding possibilistic Kripke
structure $M^+$  using the transitive closure $P^+$  as the transition possibility distribution is presented in Fig. 2. If $B=(L(s,a))_{s\in S}=(1,0.7,1,0)^T$, where we use the superscript ``$T$'' to denote the transpose operation of the fuzzy matrix. Then, by Eq.(\ref{eq:repeated reachability possibility}), we have ,
$Po(\square\lozenge B)=P^+\circ diag(P^+(t,t))_{t\in S}\circ B=\left(\begin{array}{cccc}0.6\\
0.5\\
0.9\\
0.6
\end{array}
\right)$.
By Eq.(\ref{eq:pesistence possibility}), we have
$Po(\lozenge\square B)=P^*\circ r_{D_B\circ P}=\left(\begin{array}{cccc}0.6\\
0.5\\
0.9\\
0.6
\end{array}
\right)$.
\begin{figure}[ht]
\begin{center}
\includegraphics[scale=0.7]{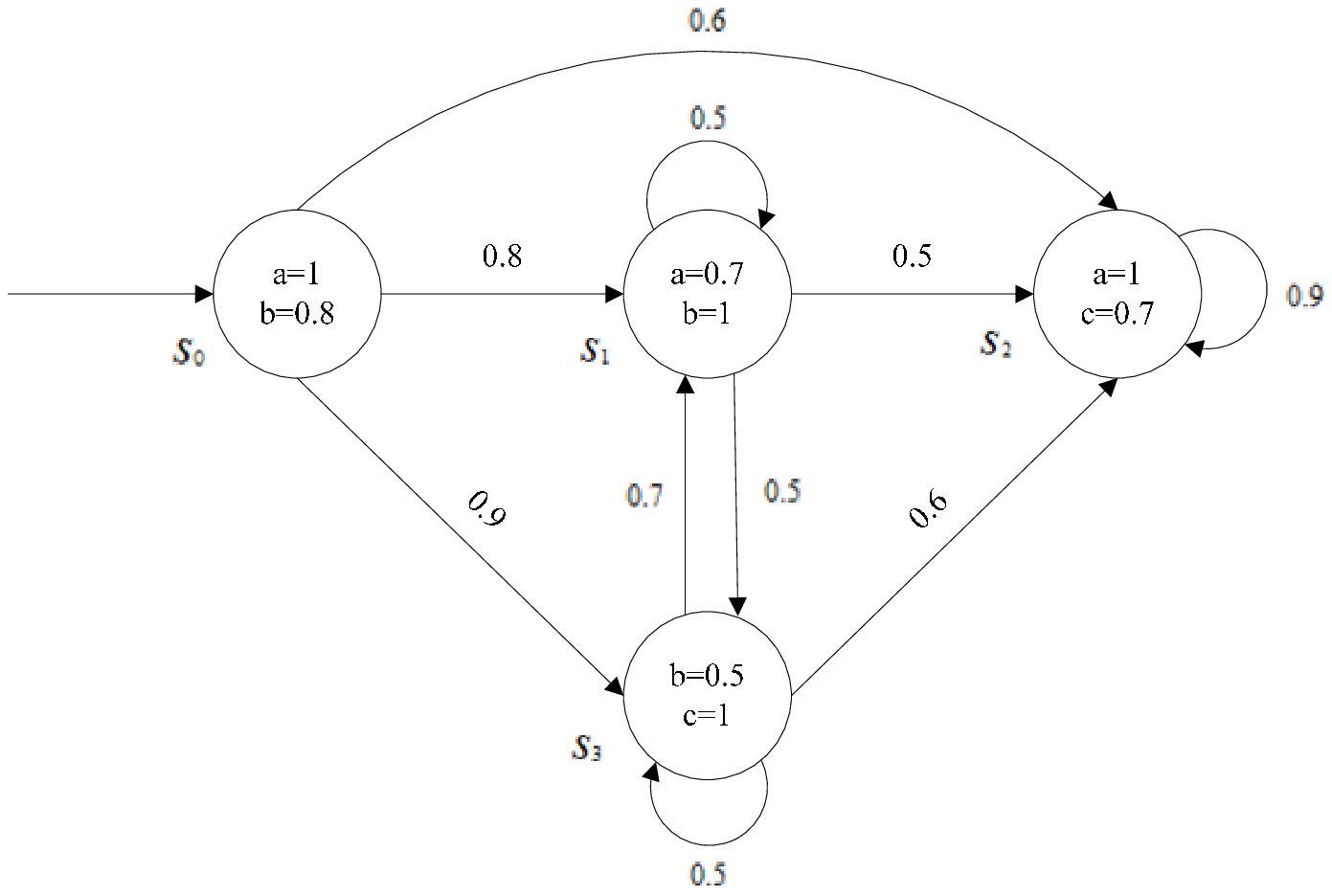}
\center{Fig.2.}	The corresponding $M^+$  of $M$ in Fig.1
\vspace{-0.3cm}
\end{center}
\end{figure}}
\end{example}

\subsection{Possibility measure of fuzzy regular safety property}

     Safety properties are often characterized as ``nothing bad should happen''. Formally, in classical case, safety property is defined as
an $LT$ property over $AP$ such that any infinite word $考$ where $P$ does not hold contains a bad prefix. Since it
is difficult to define the notion of bad prefix in fuzzy logic or possibility logic, we use the dual notion
of good prefixes to define the fuzzy safety property here. Of course, they
are equivalent in the classical case. In the following, we always assume that $\Sigma=l^{AP}$ for some finite subset $l\subseteq [0,1]$.

\begin{definition}\label{de:fuzzy safety property}
{\rm For a fuzzy linear-time property $P : \Sigma^{\omega}\longrightarrow [0,1]$, define a fuzzy
language $GPref(P) : \Sigma^*\longrightarrow [0,1]$ as,
$$GPref(P)(\theta) = \bigvee\{P(\theta\sigma) |\sigma\in \Sigma^{\omega}\}$$ for any $\theta\in \Sigma^*$, which is called the good prefixes of $P$.

$P$ is called a fuzzy safety property if
$$\bigwedge\{GPref(P)(\theta)|\theta\in Pref (\sigma)\}=P(\sigma)$$ for any $\sigma\in \Sigma^{\omega}$, where $Pref(\sigma) =\{\theta\in \Sigma^* | \sigma=\theta\sigma^{\prime}$ for some $\sigma^{\prime}\in \Sigma^{\omega}\}$ is called the prefix set of $\sigma$.}
\end{definition}

If $P$ is a fuzzy safety property and $GPref(P)$ is a fuzzy regular language over $\Sigma$, then we call $P$ a fuzzy regular safety property.

For a GPKS $M=(S,P, I, L, AP)$ and a fuzzy finite automaton ${\cal A}=(Q,\Sigma,\delta,J, F)$, we can define their tensor product $M\otimes {\cal A}=(S\times Q, P^{\prime}, I^{\prime}, L^{\prime}, AP^{\prime})$, a new GPKS.

\begin{definition}\label{de:tensor product of GPKS and NFA}
{\rm Let $M=(S,P,I,AP,L)$ be a GPKS and ${\cal A}=(Q,\Sigma,\delta,J,F)$ be a fuzzy finite automaton.
The product $M\otimes {\cal A}$ is a GPKS, $M\otimes {\cal A}=(S\times Q,P',I',AP',L')$ , where
$AP'=S\times Q$, and $L'(s,q)=(s,q)$  for  any  $(s,q)\in S\times Q$;

$I'(s,q)=I(s)\wedge\bigvee_{q_0\in Q}J(q_0)\wedge \delta(q_0,L(s),q)$,

\noindent and the transition possibility distribution of $M\otimes {\cal A}$ is,

$P'((s,q),(s',q')) =P(s,s')\wedge \delta(q,L(s'),q')$.}
\end{definition}


Then we have:

\begin{theorem}\label{th:possibility of fuzzy regualar safety}
Let $P$ be a fuzzy regular safety property such that $GPref(P)$ is accepted by a deterministic fuzzy finite automaton ${\cal A}$. Then we have
\begin{equation}\label{eq:possibility of regular safety property}
Po^M(s\models P)=Po^{M\otimes{\cal A}}((s,q_s)\models \square B),
\end{equation}
where  $q_s=\delta(q_0,L(s))$, and $B=S\times F=\sum_{s\in S, q\in Q}F(q)/(s,q)$, which means that $B(s,q)=F(q)$ for any $(s,q)\in S\times Q$.

\end{theorem}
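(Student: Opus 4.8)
The plan is to unfold the right-hand side $Po^{M\otimes{\cal A}}((s,q_s)\models\square B)$ according to the definition of the satisfaction possibility and of $\square B$, and to match it term by term with the left-hand side $Po^M(s\models P)=\bigvee_{\pi\in Paths(s)}Po^{M_s}(\pi)\wedge P(L(\pi))$. First I would exploit the determinism of ${\cal A}$. Because the transition function and the initial state of ${\cal A}$ are deterministic (so the firing transition carries value $1$), the automaton component of any path in $(M\otimes{\cal A})_{(s,q_s)}$ is forced by its $M$-component: writing a product path as $\rho=(s_0,p_0)(s_1,p_1)\cdots$ with $(s_0,p_0)=(s,q_s)$, the condition $P'((s_i,p_i),(s_{i+1},p_{i+1}))=P(s_i,s_{i+1})\wedge\delta(p_i,L(s_{i+1}),p_{i+1})>0$ forces $p_{i+1}=\delta(p_i,L(s_{i+1}))$, and a short induction together with $q_s=\delta(q_0,L(s))$ yields $p_i=\ds(q_0,L(s_0)L(s_1)\cdots L(s_i))$. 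This sets up a bijection between $Paths(s)$ in $M$ and $Paths((s,q_s))$ in $M\otimes{\cal A}$, under which $\pi=s_0s_1\cdots$ corresponds to the $\rho$ with the $p_i$ above.

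Next I would check that this bijection preserves path possibilities. Since the deterministic transitions carry value $\delta(p_i,L(s_{i+1}),p_{i+1})=1$ and the singleton initial distribution of $(M\otimes{\cal A})_{(s,q_s)}$ gives value $1$ at $(s,q_s)$, the formula for $Po^{(M\otimes{\cal A})_{(s,q_s)}}(\rho)$ collapses to $\bigwedge_{i\geq0}P(s_i,s_{i+1})=Po^{M_s}(\pi)$. Then I would evaluate $\square B$ on $\rho$: using $B(s,q)=F(q)$ gives $\square B(\rho)=\bigwedge_{i\geq0}F(p_i)$, and since ${\cal A}$ is deterministic its accepted language satisfies $L({\cal A})(w)=F(\ds(q_0,w))$; hence $F(p_i)=L({\cal A})(L(s_0)\cdots L(s_i))=GPref(P)(L(s_0)\cdots L(s_i))$. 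As $i$ ranges over the naturals, the words $L(s_0)\cdots L(s_i)$ run through the nonempty prefixes of $L(\pi)$, so $\square B(\rho)=\bigwedge\{GPref(P)(\theta)\mid\theta\in Pref(L(\pi))\}$, up to possibly the empty prefix.

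Finally I would invoke that $P$ is a fuzzy safety property: by Definition \ref{de:fuzzy safety property}, $\bigwedge\{GPref(P)(\theta)\mid\theta\in Pref(\sigma)\}=P(\sigma)$ for every $\sigma\in\Sigma^{\omega}$, applied to $\sigma=L(\pi)$, which gives $\square B(\rho)=P(L(\pi))$. Substituting back and taking the supremum across the path bijection yields $Po^{M\otimes{\cal A}}((s,q_s)\models\square B)=\bigvee_{\pi\in Paths(s)}Po^{M_s}(\pi)\wedge P(L(\pi))=Po^M(s\models P)$, as required. I expect the main obstacle to be the careful bookkeeping around the empty prefix: the product path only exposes the automaton states $p_i$ for $i\geq0$, i.e. the good-prefix values of the \emph{nonempty} prefixes $L(s_0)\cdots L(s_i)$, whereas $Pref(L(\pi))$ also contains the empty word. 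I would resolve this using the monotonicity $GPref(P)(\theta\theta')\leq GPref(P)(\theta)$, so that $GPref(P)(\epsilon)$ dominates every other good-prefix value and dropping it leaves the infimum, and hence the identity with $P(L(\pi))$, unchanged.
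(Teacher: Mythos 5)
Your proof is correct and takes essentially the same route as the paper's Appendix C: both sides are unfolded by definition, determinism of ${\cal A}$ is used to identify product paths from $(s,q_s)$ with $M$-paths from $s$ while preserving path possibilities, and the identity $L({\cal A})(w)=F(\ds(q_0,w))$ together with the fuzzy-safety definition converts $\square B$ along the induced run into $P(L(\pi))$. Your explicit handling of the empty prefix via monotonicity of $GPref(P)$ is a small gap the paper silently glosses over, and you resolve it correctly.
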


The proof is placed in Appendix C.

Theorem \ref{th:possibility of fuzzy regualar safety} gives a correction of Theorem 19 in \cite{li13}. In \cite{li13}, $P$ is a classical regular safety property.

Dually, we have
\begin{eqnarray*}
Ne(s\models P)&=& 1-Po(s\not\models P)\\
&=&1-Po((s,q_s)\not\models \square B)\\
&=&1-Po((s,q_s)\models\neg\square B)\\
&=&1-Po((s,q_s)\models \lozenge\neg B),
\end{eqnarray*}
that is,
\begin{equation}\label{eq:necessity of regular safety property}
Ne(s\models P)=1-Po((s,q_s)\models \lozenge\neg B),
\end{equation}
where $\neg B(s)=1-B(s)$.

\subsection{Possibility measure of fuzzy $\omega$-regular property}

Furthermore, for a GPKS $M$, we study how to calculate $Po(s\models P)$ for a general fuzzy $\omega$-regular property $P$ for some state $s$ in $M$.

\begin{theorem}\label{th:possibility of omega-fuzzy regualar}
Let $P$ be a fuzzy $\omega$-regular property such that $P$ is accepted by a fuzzy B\"{u}chi finite automaton ${\cal A}$, i.e., $L_{\omega}({\cal A})=P$. Then we have
\begin{equation}\label{eq:possibility of infinte regular property}
Po^M(s\models P)=Po^{M_s\otimes{\cal A}}(I'\models \square\lozenge B),
\end{equation}
where  $B=S\times F=\sum_{s\in S, q\in Q}F(q)/(s,q)$.

In particular, if ${\cal A}$ is deterministic, and $q_s=\delta(q_0,L(s))$, then we have
$$Po^M(s\models P)=Po^{M_s\otimes{\cal A}}((s,q_s)\models \square\lozenge B).$$
\end{theorem}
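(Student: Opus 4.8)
The plan is to expand both sides through their defining suprema and to set up a weight-preserving correspondence between infinite paths of the product $M_s\otimes\mathcal{A}$ and pairs consisting of a path $\pi=s_0s_1\cdots\in Paths(s)$ together with an accepting run of $\mathcal{A}$ on the trace $L(\pi)$. First I would unfold the right-hand side. By the initial-state version of Definition \ref{de:the possibility measure of lt property} applied to the GPKS $M_s\otimes\mathcal{A}$ with property $\square\lozenge B$,
$$Po^{M_s\otimes\mathcal{A}}(I'\models\square\lozenge B)=\bigvee_{\varrho}Po^{M_s\otimes\mathcal{A}}(\varrho)\wedge\square\lozenge B(\varrho),$$
where $\varrho=(s_0,p_0)(s_1,p_1)\cdots$ ranges over product paths. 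Since $AP'=S\times Q$ and $L'$ is the identity labeling, $\square\lozenge B(\varrho)=\bigwedge_{i\ge0}\bigvee_{j\ge i}B(s_j,p_j)=\bigwedge_{i\ge0}\bigvee_{j\ge i}F(p_j)$.

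Next I would substitute the product data from Definition \ref{de:tensor product of GPKS and NFA}. Because $M_s$ has $s$ as its unique initial state, $I'(s_0,p_0)$ is nonzero only when $s_0=s$ and then equals $\bigvee_{q_0}J(q_0)\wedge\delta(q_0,L(s),p_0)$, while $P'((s_i,p_i),(s_{i+1},p_{i+1}))=P(s_i,s_{i+1})\wedge\delta(p_i,L(s_{i+1}),p_{i+1})$. Collecting the $P$-factors and the $\delta$/$F$-factors separately, the summand becomes
$$\Big(I_{M_s}(s_0)\wedge\bigwedge_{i\ge0}P(s_i,s_{i+1})\Big)\wedge\bigvee_{q_0}\Big(J(q_0)\wedge\delta(q_0,L(s_0),p_0)\wedge\bigwedge_{i\ge0}\delta(p_i,L(s_{i+1}),p_{i+1})\wedge\bigwedge_{i\ge0}\bigvee_{j\ge i}F(p_j)\Big).$$
The first parenthesis is exactly $Po^{M_s}(\pi)$ for $\pi=s_0s_1\cdots$. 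Relabeling $r_0=q_0$ and $r_{k+1}=p_k$ turns the second parenthesis into the generic term of the B\"uchi acceptance weight of $\mathcal{A}$ on the word $L(\pi)=L(s_0)L(s_1)\cdots$ (so that $\sigma_{i+1}=L(s_i)$), using that the acceptance condition $\bigwedge_{i\ge0}\bigvee_{j\ge i}F(\cdot)$ depends only on the tail of the run and is therefore insensitive both to the one-step index shift and to whether $r_0$ is counted. Distributing $\min$ over $\sup$, which is valid on $[0,1]$, lets me pull the constant $Po^{M_s}(\pi)$ out of the supremum over runs, and taking the supremum over all $p_0,p_1,\ldots$ together with the already-present $\bigvee_{q_0}$ reconstitutes $L_{\omega}(\mathcal{A})(L(\pi))=P(L(\pi))$ by Definition \ref{de:buchi}.

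Finally I would reorganize the outer supremum over product paths as a supremum over $\pi\in Paths(s)$ followed by the inner supremum over runs; any sequence carrying a zero transition weight or corresponding to a non-extendable run contributes $0$ and may be harmlessly included, so no totality hypothesis on $\delta$ beyond this observation is needed. This yields $\bigvee_{\pi\in Paths(s)}Po^{M_s}(\pi)\wedge P(L(\pi))=Po^M(s\models P)$, the claimed identity. The deterministic specialization then follows because, when $J$ and $\delta$ are deterministic, $I'(s,q)=\delta(q_0,L(s),q)$ is concentrated at $q=q_s=\delta(q_0,L(s))$, so the initial-distribution formula $Po^{M_s\otimes\mathcal{A}}(I'\models\square\lozenge B)=\bigvee_{(s',q)}I'(s',q)\wedge Po^{M_s\otimes\mathcal{A}}((s',q)\models\square\lozenge B)$ collapses to the single term at $(s,q_s)$.

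The step I expect to require the most care is the regrouping of the nested $\sup$/$\inf$ expressions: verifying that the weight of a product path factors cleanly into the $M_s$-path possibility and one term of the B\"uchi run weight, and that the product's $\square\lozenge B$ genuinely matches $\bigwedge_{i\ge0}\bigvee_{j\ge i}F$ on the run despite the index shift induced by $I'$ having already read $L(s_0)$. Confirming that the supremum over product paths decouples into a supremum over $M_s$-paths and an inner supremum over runs, and that distributivity of $\wedge$ over $\bigvee$ is legitimately applied at each stage, is the technical heart of the argument.
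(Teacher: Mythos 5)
Your proposal is correct and follows essentially the same argument as the paper's Appendix D: the same weight-preserving correspondence between product paths $(s_0,p_0)(s_1,p_1)\cdots$ and pairs (path of $M_s$, Büchi run with the one-step index shift $r_{k+1}=p_k$), the same use of distributivity of $\wedge$ over $\bigvee$ in $[0,1]$, and the same tail-insensitivity observation for the acceptance condition; you merely expand the right-hand side and work toward the left, whereas the paper starts from $Po(s\models P)$ and regroups toward the product. The deterministic specialization via collapsing $I'$ to the single point $(s,q_s)$ also matches the paper.
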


The proof can be seen in Appendix D.

In Theorem \ref{th:possibility of omega-fuzzy regualar}, we do not require ${\cal A}$ to be deterministic. Whereas, in probabilistic version of Theorem \ref{th:possibility of omega-fuzzy regualar}, ${\cal A}$ is required to be a deterministic Rabin finite automaton (\cite{baier08}). This also shows one of the essential differences between possibilistic model checking and probabilistic model checking.

Dually, we have
\begin{equation}\label{eq:necessity of infinte regular property}
Ne^M(s\models P)=1-Po^{M_s\otimes{\cal A}}(I'\models \lozenge\square \neg B).
\end{equation}

\section{An illustrative example}

We consider the thermostat example given in \cite{chechik012}. A little revision is adopted for its applicability.

There are three models for the thermostat as shown in Fig.3. Fig.3(a) is a very
simple thermostat that can run a heater if the temperature falls below a desired threshold. The system
has one indicator ($Below$), a switch to turn it off and on ($Running$) and a variable indicating whether the heater is running ($Heat$). The system starts in state
$OFF$ and transits into $IDLE1$ when it is turned on, where it awaits the reading of the temperature
indicator. When the temperature is determined, the system transits either into  $IDLE2$ or into $HEAT$. The value of the temperature indicator is unknown in states $OFF$ and $IDLE1$.
We use three-valued GPKS: 1, 0 and 0.5 (Maybe), to model the system,
assigning $Below$ the value 0.5 in states $OFF$ and $IDLE1$ since the temperature is not determined in these two states, as depicted in Fig.3(a). Note that each state in this and the other two systems in Fig.3 contains a self-loop with the value $1$
which we omitted to avoid clutter.

We omit the possibility value $1$ in the figures of GPKSs used in the section.

Fig.3(b) shows another aspect of the thermostat system-running the air conditioner, which
has one indicator ($Above$), a switch to turn it off and on ($Running$) and a variable indicating whether the air conditioner is running ($AC$).
The behavior of this system is similar to that of the heater, with one difference:
this system handles the failure of the temperature indicator. If the temperature reading
cannot be obtained in states $AC$ or $IDLE2$, the system transits into state $IDLE1$.

Finally, Fig.3(c) gives a combined model, describing the behavior of the thermostat
that can run both the heater and the air conditioner. In this model, we use the same
three-valued GPKS. When the individual descriptions agree that the value
of a variable or transition is 1 (resp., 0), it is mapped into 1 (resp., 0) in the combined model; all other
values are mapped into 0.5.

For simplicity, we use the symbols $r,b,a,ac,h$ to represent the atomic propositions $Running$, $Below$, $Above$, $AC$ and $Heat$.

\begin{figure}[ht]
\begin{center}
\includegraphics[scale=0.9]{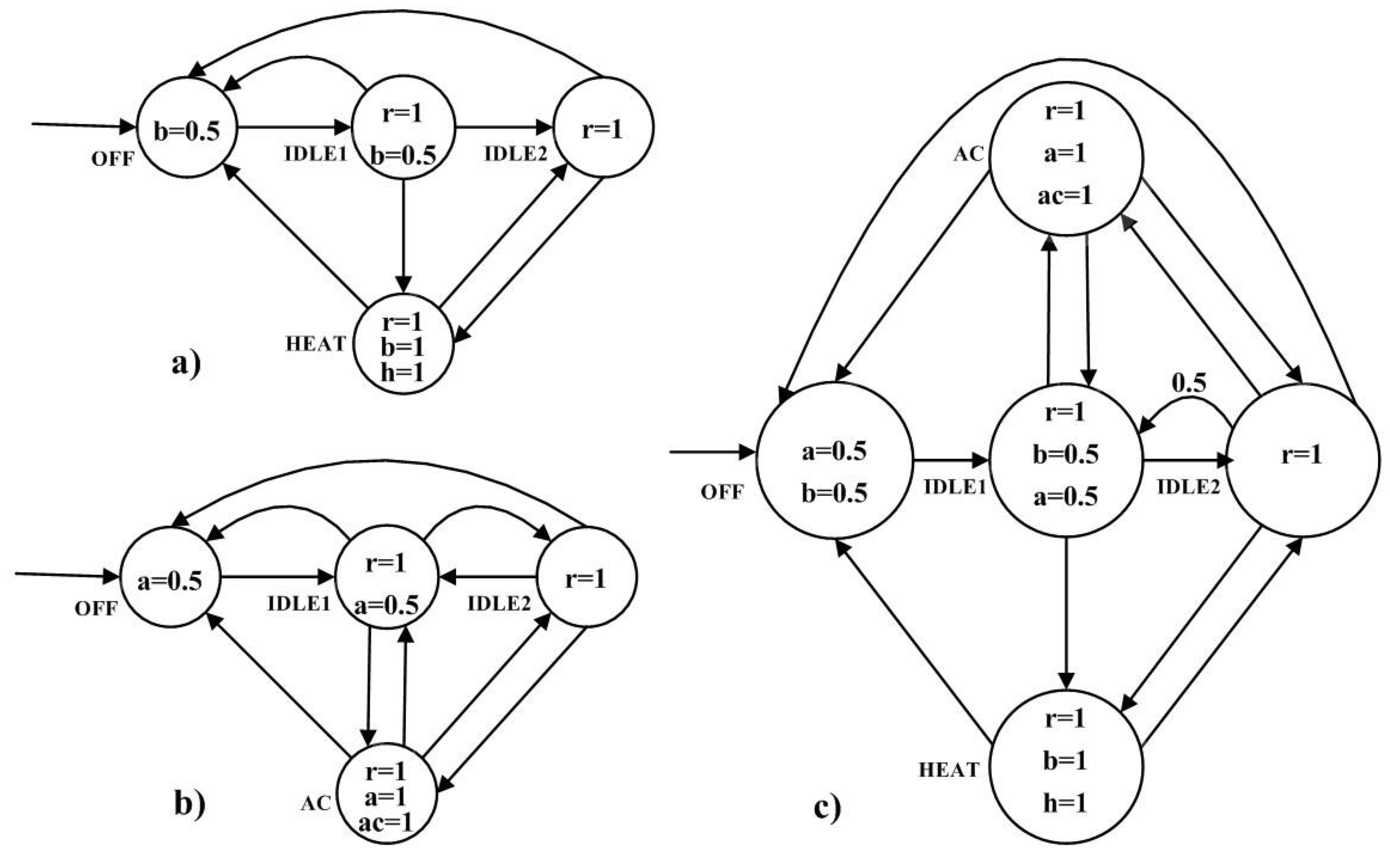}
\center{Fig.3.}Models of the thermostat. (a) Heat model $M_a$; (b) AC model $M_b$; (c) combined model $M_c$.
\vspace{-0.3cm}
\end{center}
\end{figure}

For this thermostat model, let us first check some properties which can be represented by GPoLTL formulae. These properties can be stated using possibility measures as follows:

Prop. 1. What is the possibility (resp. necessity) that the system can transit into $IDLE1$ from everywhere?

Prop. 2. What is the possibility (resp. necessity) that the system can be turned off in every computation?

Prop. 3. What is the possibility (resp. necessity) that heat is on only if air conditioning is off?

Prop. 4. What is the possibility (resp. necessity) that heat can be off when the temperature is above a threshold desired?


The above properties can be
described using state-based interpretation of GPoLTL formulae as presented in Table 1 and Table 2, respectively. The table also lists the values of these properties
in each of the models given in Fig.3. We use ``--'' to indicate that the result cannot be
obtained from this model. For example, the two individual models disagree on the question
of reachability of state $IDLE1$
from every state in the model, whereas the combined
model concludes that it is $0$. 

\vspace{0.5cm}
\begin{center}
\center{Table 1. Results of verifying properties of the thermostat system using possibility measure.}\\
\vspace{0.5cm}
\begin{tabular}{|c|c|c|c|c|}
  \hline
  Property & GPoLTL formula & Heat model & AC model & Combined model \\
  &(state-based)& & &\\
  \hline
  Prop.1 & $Po(\square \bigcirc IDLE1)$ & (1,1,0,0)$^T$ & (1,1,1,1)$^T$ & (1,1,0.5,1,0)$^T$ \\
  Prop.2 & $Po(\square \lozenge \neg Runing))$ & (1,1,1,1)$^T$ & (1,1,1,1)$^T$ & (1,1,1,1,1)$^T$ \\
  Prop.3 & $Po(\square(\neg AC\rw Heat))$ & -- & -- & (0,0,0,1,1)$^T$ \\
  Prop.4 & $Po(\square(Above\rw \neg Heat))$ & -- & -- & (1,1,1,1,1)$^T$ \\
 \hline
\end{tabular}

\end{center}

\vspace{1cm}

\begin{center}
\center{Table 2. Results of verifying properties of the thermostat system using necessity measure.}
\vspace{0.5cm}
\begin{tabular}{|c|c|c|c|c|}
  \hline
  Property & GPoLTL formula  & Heat model & AC model & Combined model \\
  &(state-based)& & &\\
  \hline
  Prop.1 & $Ne(\square \bigcirc IDLE1)$ & (0,0,0,0)$^T$ & (0,0,0,0)$^T$ & (0,0,0,0,0)$^T$ \\
  Prop.2 & $Ne(\square \lozenge \neg Runing))$ & (0,0,0,0)$^T$ & (0,0,0,0)$^T$ & (0,0,0,0,0)$^T$ \\
  Prop.3 & $Ne(\square(\neg AC\rw Heat))$ & -- & -- & (0,0,0,0,0)$^T$ \\
  Prop.4 & $Ne(\square(Above\rw \neg Heat))$ & -- & -- & (1,1,1,1,1)$^T$ \\
 \hline
\end{tabular}
\vspace{-0.4cm}
\end{center}

Note for Prop. 1, $Po(\square\bigcirc B)=P\circ (\nu Z.f_B(Z))$, $Ne(\square\bigcirc B)=1-Po(\lozenge\bigcirc \neg B)$ and $Po(\lozenge\bigcirc \neg B)=P^+\circ D_{\neg B}\circ r_P$ for the corresponding models $M_a$, $M_b$ and $M_c$ in Fig.3, where $B=\{IDLE1\}$.

Second, let us check a regular safety property $P_{safe}$ over the alphabet $\Sigma=\{0,0.5,1\}^{AP}$ which represents the property that heat system and air conditioner system in the thermostat system could not run simultaneously, as follows,

$P_{safe}=\{ A_0A_1\cdots \in\Sigma^{\omega} | \forall i\geq 0, A_i(h)=0$ or $A_i(ac)=0\}$.

$P_{safe}$ is a safety property since $GPref(P_{safe})=\{ A_0A_1\cdots A_n \in\sa | n\geq 0, $ and $\forall i\geq 0, A_i(h)=0$ or $A_i(ac)=0\}$, and for any $\sigma\in \Sigma^{\omega}$, if $\forall w\in Pref(\sigma)$, $w\in GPref(P_{safe})$, then it follows that  $\sigma\in P_{safe}$. $GPref(P_{safe})$ can be accepted by the finite deterministic finite automaton ${\cal A}$ as shown in Fig.4, so $P_{safe}$ is a regular safety property, where we use the atomic proposition $a$ to represent those $A\in \Sigma$ such that $A(a)>0$ and $\neg a$ to represent those $A\in \Sigma$ such that $A(a)=0$.

\begin{figure}[ht]
\begin{center}
\includegraphics[scale=0.5]{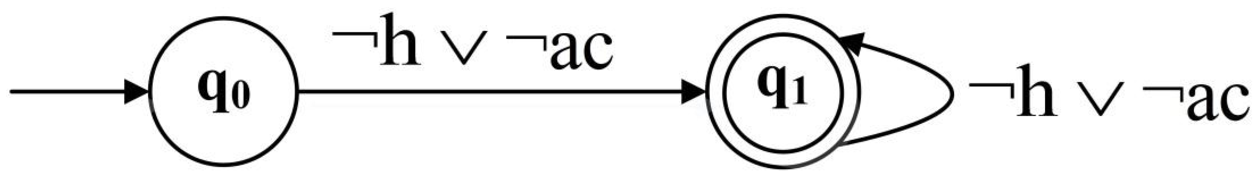}
\center{Fig.4.}The finite automaton ${\cal A}$ for $GPre(P_{safe})$ of the regular safety property $P_{safe}$.
\vspace{-0.3cm}
\end{center}
\end{figure}

Let us check the possibility $Po(OFF\models P_{safe})$ and the necessity $Ne(OFF\models P_{safe})$ for the model $M_c$.  The product of $M_c$ and ${\cal A}$ is presented in Fig.5,

\begin{figure}[ht]
\begin{center}
\includegraphics[scale=0.7]{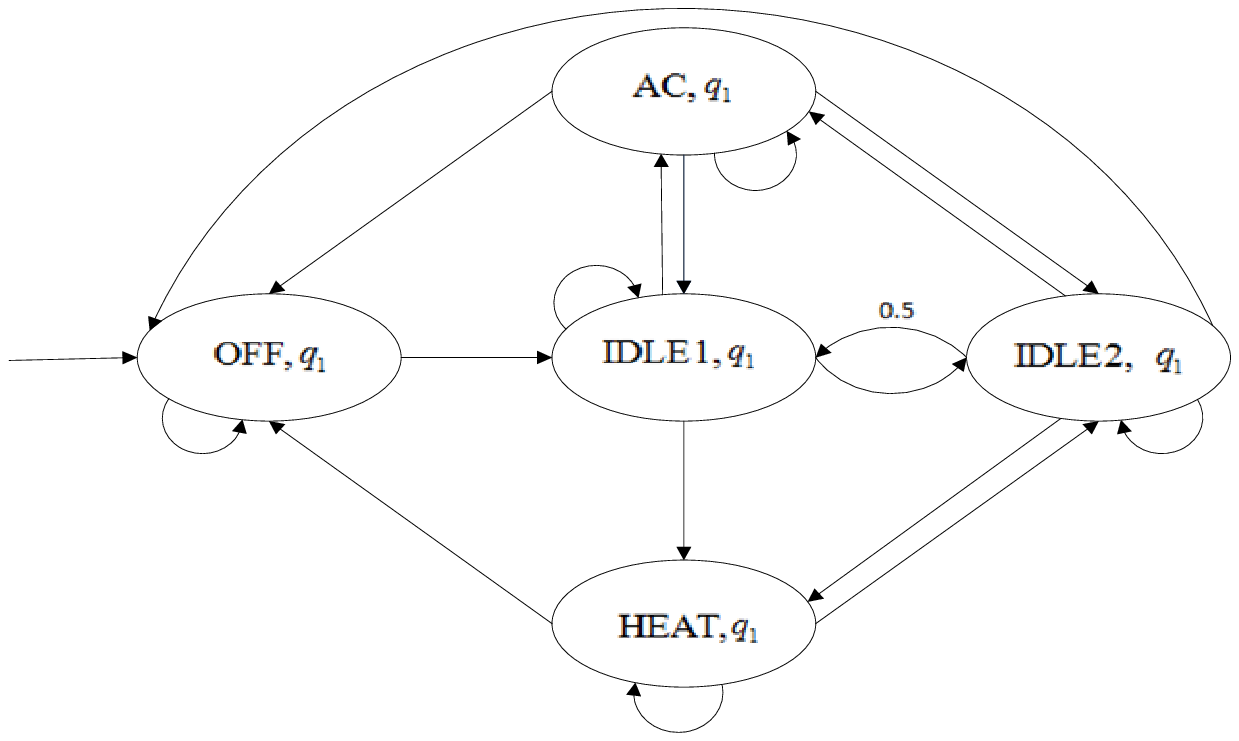}
\center{Fig.5.}The product GPKS  $M_c\otimes {\cal A}$.
\vspace{-0.3cm}
\end{center}
\end{figure}

Using Eq.(\ref{eq:possibility of regular safety property}) and Eq.(\ref{eq:necessity of regular safety property}), where $B=S\times \{q_1\}$, we have

$Po(OFF\models P_{safe})=Po^{M_s\otimes {\cal A}}((OFF,q_1)\models \square B)=1$.

$Ne(OFF\models P_{safe})=1-Po^{M_s\otimes {\cal A}}((OFF,q_1)\models \lozenge\neg B)=1-0=1$.

It means that the safety property $P_{safe}$ is certain valid in the thermostat model $M_c$.

Third, let us check a $\omega$-regular property $P=\{A_0A_1\cdots | \exists i\geq 0, \forall j\geq i, r\in A_j\}$ over the alphabet $\Sigma=\{0,0.5,1\}^{AP}$ accepted by the B\"{u}chi finite automaton ${\cal B}$ as shown in Fig.6. $P$ represents the property that the thermostat system will run in sometime forever.

\begin{figure}[ht]
\begin{center}
\includegraphics[scale=0.5]{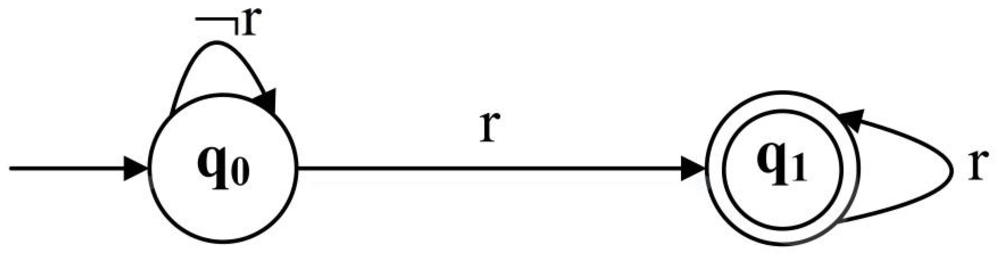}
\center{Fig.6.}The B\"{u}chi finite automaton ${\cal B}$ for $\omega$-regular property $P$.
\vspace{-0.3cm}
\end{center}
\end{figure}

Let us check the possibility $Po(OFF\models P)$ and the necessity $Ne(OFF\models P)$ for the model $M_c$.  The product of $M_c$ and ${\cal B}$ is as shown in Fig.7.

\begin{figure}[ht]
\begin{center}
\includegraphics[scale=0.5]{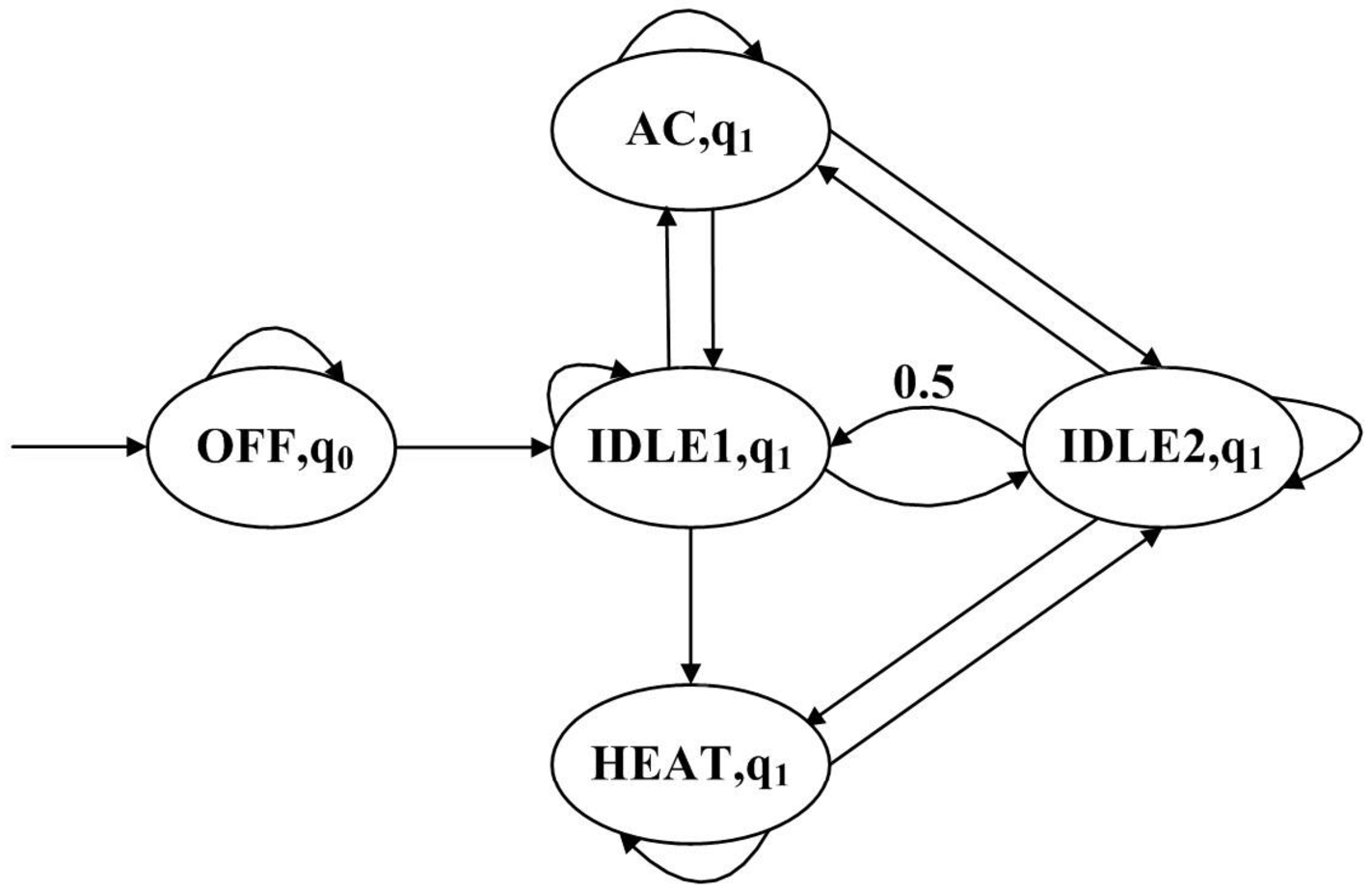}
\center{Fig.7.}The product GPKS  $M_c\otimes {\cal B}$.
\vspace{-0.3cm}
\end{center}
\end{figure}

Using Eq.(\ref{eq:possibility of infinte regular property}) and Eq.(\ref{eq:necessity of infinte regular property}), where $B=S\times \{q_1\}$, we have

$Po(OFF\models P_{safe})=Po^{M_s\otimes {\cal A}}((OFF,q_0)\models \square\lozenge B)=1$.

$Ne(OFF\models P_{safe})=1-Po^{M_s\otimes {\cal A}}((OFF,q_0)\models \lozenge\square\neg B)=1-1=0$.

It means that it is possible that the thermostat model $M_c$ will run forever from sometime on, but it is not necessary. It is possible that the thermostat model remains in $OFF$ state forever.

\section{Conclusions}\label{sec6:con}
	In this paper, we studied several important possibility measures of fuzzy linear-time properties and $GPoLTL$ formulae corresponding to them. Concretely, we introduced the notions of fuzzy linear-time properties; several particular fuzzy linear-time properties such as reachability, always reachability, repeatedly reachability and persisitence were introduced. More generally, fuzzy linear-time properties such as fuzzy regular safety properties, fuzzy $\omega$-regular properties using fuzzy automata were studied. In fact, we introduced the product GPKS of a GPKS and a fuzzy finite automaton. In which, the computation of possibility measure of GPKS meeting a fuzzy linear-time property can be translated into (always) reachability possibility or repeated reachability (persistence) possibility of the product GPKS. With these notions, we gave the quantitative verification methods of fuzzy regular safety properties and fuzzy $\omega$-regular properties.

Future case study needs to be provided. Another direction is to study the expressiveness of GPoLTL formulae and the model checking for GPoLTL formulae in general, and fuzzy time in GPoLTL as discussed in \cite{frigeri14,mukherjee13}.

\section*{Acknowledgments}

The authors would like to thank the anonymous referees for helping them refine the ideas
presented in this paper and improve the clarity of the presentation.


\section*{Appendix A: The Proof of Theorem \ref{th:reachability possibility}}

The possibility measure of eventually reaching possibility state $B$ is given by:

\begin{eqnarray*}
Po(s\models\lozenge B)&=&\bigvee_{\pi\in Paths(s)}Po^{M_s}(\pi)\wedge \lozenge B(\pi)\\
&=&\bigvee_{\pi=s_0s_1\cdots\in Paths(s)}\bigwedge_{i=0}^{\infty}P(s_i,s_{i+1})\wedge\bigvee_{j=0}^{\infty}B(s_j)\\
&=&\bigvee_{\pi=s_0s_1\cdots\in Paths(s)}\bigvee_{i=0}^{\infty}P(s,s_1)\wedge\cdots\wedge P(s_{i-1},s_i)\wedge B(s_i)\wedge\bigwedge_{j=i}^{\infty}P(s_j,s_{j+1})\\
&=&\bigvee_{i=0}^{\infty}\bigvee_{\pi=s_0\cdots s_i\in Paths_{fin}(s)}P(s,s_1)\wedge\cdots\wedge P(s_{i-1},s_i)\wedge B(s_i)
\wedge\\
&&\bigvee_{s_is_{i+1}\cdots \in Paths(s_i)}\bigwedge_{j=i}^{\infty}P(s_j,s_{j+1})\\
&=&\bigvee_{i=0}^{\infty}\bigvee_{\pi=s_0\cdots s_i\in Paths_{fin}(s)}P(s,s_1)\wedge\cdots\wedge P(s_{i-1},s_i)\wedge B(s_i)\wedge r_P(s_i)\\
&=&\bigvee_{i=0}^{\infty}(P^i\circ D_B\circ r_P)(s)\\
&=&(\bigvee_{i=0}^{\infty}P^i)\circ D_B\circ r_P (s)\\
&=&P^*\circ D_B\circ r_P (s).
\end{eqnarray*}
where $D_B$ denotes the diagonal matrix $diag(B(s))_{s\in S}$.

For the always reachability possibility, we have
\begin{eqnarray*}
Po(s\models\square B)&=&\bigvee_{\pi\in Paths(s)}Po^{M_s}(\pi)\wedge \square B(\pi)\\
&=&\bigvee_{\pi=s_0s_1\cdots \in Paths(s)}Po^{M_s}(\pi)\wedge\bigwedge_{j\geq 0}B(s_j).
\end{eqnarray*}

As shown in \cite{li15}, if we let $Po(\square B)=(Po(s\models \square B))_{s\in S}$, then $Po(\square B)$ is the greatest fixed point of the operator $f_B(Z)=B\wedge P\circ D_Z\circ r_P$, which can be solved using the fixed point algorithm.

\section*{Appendix B: The proof of Theorem \ref{th:repeated reachability and pesistence possibility}}

By Definition \ref{de:the possibility measure of lt property}, we have,
$$Po(s\models \square\lozenge B)=\bigvee_{\pi\in Paths(s)}Po^{M_s}(\pi)\wedge \square\lozenge B (\pi),$$ and $$Po(s\models \lozenge\square B)=\bigvee_{\pi\in Paths(s)}Po^{M_s}(\pi)\wedge \lozenge\square B (\pi).$$

First, we need a lemma.

\begin{lemma}
      For a finite GPKS $M$ and a fuzzy state $B: S\longrightarrow [0,1]$, we have
\begin{equation}\label{eq:repeated reachability for a state}
Po(s\models\square\lozenge B)=\bigvee_{t\in S} B(t)\wedge Po(s\models\square\lozenge t).
\end{equation}
\end{lemma}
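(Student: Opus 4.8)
The plan is to reduce the fuzzy repeated-reachability degree of $B$ along a single path to a maximum over the states that the path visits infinitely often, and then to interchange suprema. First I would fix a path $\pi = s_0 s_1 \cdots \in Paths(s)$ and let $\mathrm{Inf}(\pi) = \{t\in S : s_j = t \text{ for infinitely many } j\}$ be its set of infinitely-visited states. Since $\bigvee_{j\geq i} B(s_j)$ is nonincreasing in $i$, the outer infimum is its limit, so $\square\lozenge B(\pi) = \bigwedge_{i\geq 0}\bigvee_{j\geq i} B(s_j) = \limsup_{j} B(s_j)$. The key step, and the place where finiteness of $S$ is essential, is the identity
$$\square\lozenge B(\pi) = \bigvee_{t\in\mathrm{Inf}(\pi)} B(t).$$
Because $S$ is finite, $\mathrm{Inf}(\pi)$ is nonempty and some tail of $\pi$ consists only of states in $\mathrm{Inf}(\pi)$, so the $\limsup$ of $B(s_j)$ equals the largest value that $B$ attains on $\mathrm{Inf}(\pi)$. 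Specializing $B$ to the crisp singleton $t$ gives $\square\lozenge t(\pi) = 1$ if $t\in\mathrm{Inf}(\pi)$ and $\square\lozenge t(\pi) = 0$ otherwise.

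With this identity in hand, I would expand $Po(s\models\square\lozenge B)$ using Definition \ref{de:the possibility measure of lt property} and distributivity of $\wedge$ over $\bigvee$ in the complete lattice $[0,1]$:
$$Po(s\models\square\lozenge B) = \bigvee_{\pi\in Paths(s)} Po^{M_s}(\pi)\wedge\bigvee_{t\in\mathrm{Inf}(\pi)} B(t) = \bigvee_{\pi\in Paths(s)}\bigvee_{t\in\mathrm{Inf}(\pi)}\bigl(B(t)\wedge Po^{M_s}(\pi)\bigr).$$
Then I would commute the two suprema, reindexing over all $t\in S$ and, for each such $t$, over all paths $\pi$ with $t\in\mathrm{Inf}(\pi)$, obtaining
$$Po(s\models\square\lozenge B) = \bigvee_{t\in S} B(t)\wedge\Bigl(\bigvee_{\pi:\, t\in\mathrm{Inf}(\pi)} Po^{M_s}(\pi)\Bigr).$$

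Finally I would identify the inner supremum with $Po(s\models\square\lozenge t)$. By the singleton case above, paths with $t\notin\mathrm{Inf}(\pi)$ satisfy $\square\lozenge t(\pi)=0$ and contribute nothing to $\bigvee_{\pi} Po^{M_s}(\pi)\wedge\square\lozenge t(\pi)$, whereas paths with $t\in\mathrm{Inf}(\pi)$ contribute exactly $Po^{M_s}(\pi)$; hence $\bigvee_{\pi:\, t\in\mathrm{Inf}(\pi)} Po^{M_s}(\pi) = Po(s\models\square\lozenge t)$, which yields the claimed equation. The main obstacle is the first identity: one must argue carefully that the $\limsup$ of the $B$-values along $\pi$ coincides with the maximum of $B$ over the infinitely-visited states, which genuinely uses the finiteness of $S$ and would fail for infinite $S$. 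Everything after that is routine lattice manipulation, namely distributivity and the interchange of suprema, both valid in $[0,1]$.
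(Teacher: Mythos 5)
Your proof is correct and takes essentially the same approach as the paper's: both arguments hinge on relating $\square\lozenge B(\pi)$ to the values of $B$ on the set of infinitely-visited states (using finiteness of $S$) and on the fact that $\pi\models\square\lozenge t$ exactly when $t\in\mathrm{Inf}(\pi)$. The only difference is presentational --- the paper proves the two inequalities separately, while you establish the exact identity $\square\lozenge B(\pi)=\bigvee_{t\in\mathrm{Inf}(\pi)}B(t)$ and finish in one pass by distributivity and interchange of suprema.
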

\begin{proof}
Note that $Po(s\models\square\lozenge t)=Po^{M_s}(\{\pi\in Paths(s) | \pi\models \square\lozenge t\})$. Then for any path $\pi=s_0s_1\cdots\in Paths(s)$, let $inf(\pi)$ denote the set consisting of those states that occur in the path $\pi$ infinitely. It is obvious that $\square\lozenge B (\pi)\leq \bigvee_{t\in inf(\pi)} B(t)$. Furthermore, for any $t\in inf(\pi)$,  $\pi\models \square\lozenge t$, which implies that $Po^{M_s}(\pi)\leq Po^{M_s}(\{\pi\in Paths(s) | \pi\models \square\lozenge t\})$. It follows that $Po^{M_s}(\pi)\wedge \square\lozenge B (\pi)\leq \bigvee_{t\in inf(\pi)} B(t)\wedge Po(s\models\square\lozenge t)\leq \bigvee_{t\in S} B(t)\wedge Po(s\models\square\lozenge t)$. Therefore, $Po(s\models\square\lozenge B)\leq\bigvee_{t\in T} B(t)\wedge Po(s\models\square\lozenge t).$

Conversely, for any state $t\in S$, and any path $\pi\in Paths(s)$ satisfies $\square\lozenge t$, we have $B(t)\leq \square\lozenge B (\pi)$. It follows that $B(t)\wedge Po(s\models\square\lozenge t)$ is not larger than the right hand of Eq.(\ref{eq:repeated reachability for a state}). Therefore, $Po(s\models\square\lozenge B)\geq\bigvee_{t\in T} B(t)\wedge Po(s\models\square\lozenge t).$

Hence, $Po(s\models\square\lozenge B)=\bigvee_{t\in T} B(t)\wedge Po(s\models\square\lozenge t).$
\end{proof}

We have given the expression to calculate $Po(s\models\square\lozenge t)$ in \cite{li13}, that is, $$Po(s\models\square\lozenge t)=P^+(s,t)\wedge P^+(t,t).$$ Then we obtain a method to calculate $Po(s\models\square\lozenge B)$ as follows.

$$Po(s\models\square\lozenge B)=\bigvee_{t\in S}B(t)\wedge P^+(s,t)\wedge P^+(t,t).$$

If we write $Po(\square\lozenge B)=(Po(s\models\square\lozenge B))_{s\in S}$, then we have the expected compact expression of $Po(\square\lozenge B)$ as follows,
\begin{equation}\label{eq:repeated reachability}
Po(\square\lozenge B)=P^+\circ diag(P^+(t,t))_{t\in S}\circ B.
\end{equation}

For the possibility of the persistence property, i.e., $Po(\lozenge\square B)=(Po(s\models \lozenge\square B))_{s\in S}$, let us calculate $Po(s\models\lozenge\square B)$ as follows,
\begin{eqnarray*}
Po(s\models \lozenge\square B)&=& \bigvee_{\pi\in Paths(s)}Po^{M_s}(\pi)\wedge \lozenge\square B(\pi)\\
&=& \bigvee_{\pi=ss_1\cdots\in Paths(s)}Po^{M_s}(\pi)\wedge \bigvee_{i\geq 0}\bigwedge_{j\geq i}B(s_j)\\
&=& \bigvee_{\pi=ss_1\cdots\in Paths(s)}\bigvee_{i\geq 0}P(s,s_1)\wedge\cdots\wedge P(s_{i-1},s_i)\wedge B(s_i)\wedge P(s_i,s_{i+1})\\
&&\wedge B(s_{i+1})\wedge P(s_{i+1},s_{i+2})\cdots\\
&=& \bigvee_{\pi=ss_1\cdots\in Paths(s)}\bigvee_{i\geq 0}P(s,s_1)\wedge\cdots\wedge P(s_{i-1},s_i)\wedge (D_B\circ P)(s_i,s_{i+1})\\
&&\wedge (D_B\circ P)(s_{i+1},s_{i+2})\cdots\\
&=& \bigvee_{i\geq 0}\bigvee_{s_1,\cdots,s_i\in S}P(s,s_1)\wedge\cdots\wedge P(s_{i-1},s_i)\wedge r_{D_B\circ P}(s_i)\\
&=& \bigvee_{i\geq 0}P^i\circ r_{D_B\circ P}(s)\\
&=& (\bigvee_{i\geq 0}P^i)\circ r_{D_B\circ P}(s)\\
&=& P^*\circ r_{D_B\circ P}(s).
\end{eqnarray*}
Hence, $Po(\lozenge\square B)=P^*\circ r_{D_B\circ P}$.

\section*{Appendix C: The proof of Theorem \ref{th:possibility of fuzzy regualar safety}}

The calculation is as follows,
\begin{eqnarray*}
Po(s\models P)&=& \bigvee_{\pi\in Paths(s)}Po^{M_s}(\pi)\wedge P(L(\pi))\\
&=& \bigvee_{\pi\in Paths(s)}Po^{M_s}(\pi)\wedge \bigwedge\{L({\cal A})(\theta) |\theta\in Pref(L(\pi))\}\\
&=& \bigvee_{\pi=s_0s_1\cdots\in Paths(s)}Po^{M_s}(\pi)\wedge \bigwedge_{j\geq 0}\{F(q_j) | q_j= \ds(q_0, L(s_0)\cdots L(s_j))\}\\
&=& \bigvee_{\pi\in Paths(s)}Po^{M_s}(\pi)\wedge\bigwedge_{j\geq 0}F(q_j),
\end{eqnarray*}
where the state sequence $q_0q_1\cdots$ is defined by $q_{j+1}=\delta(q_j,L(s_j))$ for any $j\geq 0$ for $\pi=s_0s_1\cdots$ with $s_0=s$.
On the other hand, with the same sequence $q_0q_1\cdots$, we have
\begin{eqnarray*}
Po^{M\otimes{\cal A}}((s,q_s)\models \square B)&=& \bigvee_{\pi\in Paths(s,q_s)}Po^{M_{(s,q_s)}}(\pi)\wedge \bigwedge_{j\geq 0} B(\pi[j])\\
&=& \bigvee_{\pi\in Paths(s)}Po^{M_s}(\pi)\wedge\bigwedge_{j\geq 0}F(q_j).
\end{eqnarray*}

Hence, $Po^M(s\models P)=Po^{M\otimes{\cal A}}((s,q_s)\models \square B)$.

\section*{Appendix D: The proof of Theorem \ref{th:possibility of omega-fuzzy regualar}}

The calculation is as follows,
\begin{eqnarray*}
Po(s\models P)&=& \bigvee_{\pi\in Paths(s)}Po^{M_s}(\pi)\wedge P(L(\pi))\\
&=& \bigvee_{\pi\in Paths(s)}Po^{M_s}(\pi)\wedge L({\cal A})(L(\pi))\\
&=& \bigvee_{\pi=s_0s_1\cdots\in Paths(s)}Po^{M_s}(\pi)\wedge \bigvee\{J(q_0)\wedge\bigwedge_{i\geq
0}\delta(q_i,\sigma_{i+1}, q_{i+1})\\
&&\wedge\bigwedge_{i\geq 0}\bigvee_{j\geq i}F(q_j) |
q_i\in Q \ {\rm for\ any}\ i\geq 0\}\\
&=&\bigvee_{\pi=s_0s_1\cdots\in Paths(s)}\bigvee_{q_0\in Q}\bigvee_{q_1q_2\cdots\in\delta^{\omega}(q_0,L(\pi))}J(q_0)\wedge \delta(q_0,L(s_0),q_1)\\
&&\wedge\bigwedge_{i\geq 0}P(s_i,s_{i+1})\wedge \delta(q_i,L(s_i),q_{i+1})\wedge\bigwedge_{i\geq 0}\bigvee_{j\geq i}F(q_j)\\
&=&\bigvee_{q_1\in Q}\bigvee_{\pi'=(s_0,q_1)(s_1,q_2)\cdots\in Paths_{M_s\otimes{\cal A}}((s,q_1))}I'(s_0,q_1)\\
&&\wedge \bigwedge_{i\geq 0}P'((s_i,q_{i+1}),(s_{i+1},q_{i+2}))\wedge\bigwedge_{i\geq 0}\bigvee_{j\geq i}B(s_j,q_{j+1})\\
&=&Po^{M_s\otimes{\cal A}}(I'\models \square\lozenge B).
\end{eqnarray*}

Hence, $$Po^M(s\models P)=Po^{M_s\otimes{\cal A}}(I'\models \square\lozenge B).$$

If ${\cal A}$ is deterministic, then $\delta(q_0,L(s))$ contains a unique state, denoted $q_s$, and then we have $$Po^M(s\models P)=Po^{M_s\otimes{\cal A}}((s,q_s)\models \square\lozenge B).$$

\section*{References}

\baselineskip 15pt

\end{document}